\theoremstyle{theorem}
\newtheorem{theorem}{Theorem}
\newtheorem{lemma}{Lemma}
\newtheorem{corollary}{Corollary}
\newtheorem{remark}{Remark}
\newenvironment{reminder}[1]{\bigskip
	\noindent {\bf Reminder of #1.}\em}{\smallskip}
\theoremstyle{definition}
\newtheorem{definition}{Definition}
\newenvironment{proofof}[1]{\begin{proof}[{\textit{Proof of #1}}]}{\end{proof}}
\newcommand{\calC}{{\mathcal{C}}}
\newcommand{\calD}{\mathcal{D}}
\newcommand{\calE}{{\mathcal{E}}}
\newcommand{\calF}{\mathcal{F}}
\newcommand{\calG}{{\mathcal{G}}}
\newcommand{\calH}{{\mathcal{H}}}
\newcommand{\Ex}{\mathbb{E}}
\newcommand{\eps}{\varepsilon}
\renewcommand{\AC}{\mathsf{AC}}
\renewcommand{\NC}{\mathsf{NC}}
\newcommand{\Otilde}{\widetilde{O}}
\newcommand{\calU}{\mathcal{U}}
\newcommand{\calR}{\mathcal{R}}
\newcommand{\bits}{{\{0,1\}}}
\newcommand{\DT}{\mathsf{DT}}
\newcommand{\CDT}{\mathsf{CDT}}
\newcommand{\calS}{\mathcal{S}}
\newcommand{\ind}{\mathbbm{1}}
\newcommand{\AND}{\mathsf{AND}}
\newcommand{\XOR}{\mathsf{XOR}}
\newcommand{\OR}{\mathsf{OR}}
\newcommand{\NOT}{\mathsf{NOT}}
\newcommand*\rel@kern[1]{\kern#1\dimexpr\macc@kerna}
\newcommand*\widebar[1]{%
  \begingroup
  \def\mathaccent##1##2{%
    \rel@kern{0.8}%
    \overline{\rel@kern{-0.8}\macc@nucleus\rel@kern{0.2}}%
    \rel@kern{-0.2}%
  }%
  \macc@depth\@ne
  \let\math@bgroup\@empty \let\math@egroup\macc@set@skewchar
  \mathsurround\z@ \frozen@everymath{\mathgroup\macc@group\relax}%
  \macc@set@skewchar\relax
  \let\mathaccentV\macc@nested@a
  \macc@nested@a\relax111{#1}%
  \endgroup
}
\title{Improved Pseudorandom Generators for $\AC^0$ Circuits}
\author{Xin Lyu\thanks{Department of EECS, University of California at Berkeley. Email: \texttt{xinlyu@berkeley.edu}}}
\begin{document}

\maketitle

\begin{abstract}
We give PRG for depth-$d$, size-$m$ $\AC^0$ circuits with seed length $O(\log^{d-1}(m)\log(m/\eps)\log\log(m))$. Our PRG improves on previous work \cite{DBLP:conf/coco/TrevisanX13, DBLP:conf/approx/ServedioT19, DBLP:conf/stoc/Kelley21} from various aspects. It has optimal dependence on $\frac{1}{\eps}$ and is only one ``$\log\log(m)$'' away from the lower bound barrier. For the case of $d=2$, the seed length tightly matches the best-known PRG for CNFs \cite{DBLP:conf/approx/DeETT10, DBLP:conf/coco/Tal17}.

There are two technical ingredients behind our new result; both of them might be of independent interest. First, we use a partitioning-based approach to construct PRGs based on restriction lemmas for $\AC^0$. Previous works \cite{DBLP:conf/coco/TrevisanX13,DBLP:conf/approx/ServedioT19,DBLP:conf/stoc/Kelley21} usually built PRGs on the Ajtai-Wigderson framework \cite{DBLP:journals/acr/AjtaiW89}. Compared with them, the partitioning approach avoids the extra ``$\log(n)$'' factor that usually arises from the Ajtai-Wigderson framework, allowing us to get the almost-tight seed length. The partitioning approach is quite general, and we believe it can help design PRGs for classes beyond constant-depth circuits.

Second, improving and extending \cite{DBLP:conf/coco/TrevisanX13, DBLP:conf/approx/ServedioT19, DBLP:conf/stoc/Kelley21}, we prove a full derandomization of the powerful multi-switching lemma \cite{DBLP:journals/siamcomp/Hastad14a}. We show that one can use a short random seed to sample a restriction, such that a family of DNFs simultaneously simplifies under the restriction with high probability. This answers an open question in \cite{DBLP:conf/stoc/Kelley21}. Previous derandomizations were either partial (that is, they pseudorandomly choose variables to restrict, and then fix those variables to truly-random bits) or had sub-optimal seed length. In our application, having a fully-derandomized switching lemma is crucial, and the randomness-efficiency of our derandomization allows us to get an almost-tight seed length.
\end{abstract}

\section{Introduction}

Let $\calF$ be a class of functions. A pseudorandom generator (PRG) for $\calF$ is an algorithm $G:\bits^s \to \bits^n$ that maps a short random seed $x$ into a longer string $\calG(x)$ that appears random to every distinguisher in $\calF$. More specifically, we say that $\calG$ $\eps$-fools $\calF$, if for every $f\in \calF$, it holds
\[
\left| \Ex_{x\sim \calU_s} [f(G(x))] - \Ex_{x\sim \calU_n}[f(x)] \right| \le \eps.
\]
In this work, we consider the class of bounded-depth Boolean circuits (aka. $\AC^0$ circuits), a circuit class that has been studied extensively over the past few decades. Constructing PRGs for $\AC^0$ circuits is a central problem that has been studied extensively \cite{DBLP:journals/acr/AjtaiW89, DBLP:journals/acr/Hastad89, DBLP:journals/jcss/NisanW94, DBLP:journals/jacm/Braverman10, DBLP:conf/coco/Tal17, DBLP:journals/rsa/HarshaS19, DBLP:conf/coco/TrevisanX13, DBLP:conf/approx/ServedioT19, DBLP:conf/stoc/Kelley21}. 

Quantitatively, let $\AC_d^0[m(n)]$ be the class of depth-$d$, size-$m$ circuits. By the probabilistic method, one can show that there exists an $\eps$-PRG for $\AC_d^0[m(n)]$ with seed length $O(\log(m/\eps))$. However, finding an explicit PRG with the same seed length seems beyond current technique. In particular, any PRG for $\AC^0_d[m(n)]$ with seed length $\log^{o(d)}(m)$ would give a non-trivial PRG for $\NC^1$ and imply that $\NP\not \subseteq \NC^1$ (see, e.g., \cite[Appendix A]{DBLP:journals/eccc/GoldreichW13}). Also, it was observed in \cite{DBLP:conf/coco/TrevisanX13, DBLP:conf/approx/ServedioT19, DBLP:conf/stoc/Kelley21} that, if there is an explicit pseudorandom generator with seed length $o(\log^{d}(m/\eps))$ for $\AC_d^0[m(n)]$
, then there is an explicit function that requires $\AC_d^0$-circuits of size $2^{\omega(n^{1/(d-1)})}$ to compute, improving H{\aa}stad's lower bound~\cite{DBLP:journals/acr/Hastad89} that has resisted attack for more than 30 years!



There is an extensive line of work \cite{DBLP:journals/acr/AjtaiW89, DBLP:journals/acr/Hastad89, DBLP:journals/jcss/NisanW94, DBLP:journals/jacm/Braverman10, DBLP:conf/coco/Tal17, DBLP:journals/rsa/HarshaS19, DBLP:conf/coco/TrevisanX13, DBLP:conf/approx/ServedioT19, DBLP:conf/stoc/Kelley21} aiming to construct better and better PRGs for $\AC^0$. The seminal paper by Ajtai and Wigderson~\cite{DBLP:journals/acr/AjtaiW89} gave the first non-trivial pseudorandom generator for $\mathsf{AC}^0$. Their PRG has seed length $n^{o(1)}$ for polynomial-size $\AC^0$ circuits. Later, Nisan constructed PRG for $\AC^0$ circuits by applying H{\aa}stad's correlation bound \cite{DBLP:journals/acr/Hastad89} to the Nisan-Wigderson ``hardness-to-randomness'' framework \cite{DBLP:journals/jcss/NisanW94}. Nisan's PRG has seed length $\log^{2d+O(1)}(m/\eps)$ when $\eps$-fooling $\AC^0_d[m(n)]$ circuits. A breakthrough result by Braverman~\cite{DBLP:journals/jacm/Braverman10} showed that any $\log^{O(d^2)}(m/\eps)$-wise independent distribution $\eps$-fools $\AC_d^0[m(n)]$. Combined with the standard construction of $k$-wise independent distribution, this gave a PRG with seed length $\log^{O(d^2)}(m/\eps)$. Braverman's analysis was further sharpened by Tal \cite{DBLP:conf/coco/Tal17} and by Harsha and Srinivasan \cite{DBLP:journals/rsa/HarshaS19}, bringing the seed length down to $\log^{3d+O(1)}(m)\log(1/\eps)$. The Ajtai-Wigderson technique was revisited by Trevisan and Xue \cite{DBLP:conf/coco/TrevisanX13}, who constructed a PRG with seed length $\log^{d+O(1)}(m/\eps)$. Recently there were two incomparable improvements over the Trevisan-Xue result, one by Servedio and Tan \cite{DBLP:conf/approx/ServedioT19} with seed length $\log^{d+O(1)}(m)\cdot \log(1/\eps)$ (i.e., it had optimal dependence on $\frac{1}{\eps}$), and the other by Kelley \cite{DBLP:conf/stoc/Kelley21}, who got seed length $\Otilde(\log^{d}(m/\eps)\log n)$.


\subsection{Our Result}

The main result of this work is a new PRG for $\AC^0_d[m]$ with improved seed length $O(\log^{d-1}(m)\cdot \log(m/\eps)\cdot \log\log m)$.

\begin{theorem}\label{theo:prg-ac0}
For every $d \in \mathbb{N}$ the following is true. For every $m,n\in \mathbb{N}$ such that $m\ge n$ and every $\varepsilon > 0$, there is an $\eps$-PRG for $\AC_d^0[m]$ circuits with seed length $O(\log^{d-1}(m)\cdot \log(m/\eps)\cdot \log\log(m))$.
\end{theorem}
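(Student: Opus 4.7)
My plan is to combine the two technical ingredients highlighted in the abstract: a fully-derandomized version of H{\aa}stad's multi-switching lemma, and a partitioning-based framework that converts such a restriction lemma into a PRG while avoiding the extra $\log n$ factor incurred by the Ajtai-Wigderson style iteration.

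\textbf{Step 1: Fully derandomize the multi-switching lemma.} Given a family $F_1,\ldots,F_M$ of DNFs of width $w$ on $n$ variables, I plan to construct a distribution over restrictions $\rho\in \{0,1,*\}^n$ sampled with a short seed (on the order of $\log(Mw/\delta)$ times a polylog factor), such that with probability at least $1-\delta$ every $F_i$ simultaneously collapses to a decision tree of depth $O(\log(M/\delta))$. The key feature needed is that the restriction is \emph{fully} pseudorandom: both the locations of the fixed variables and the values assigned to them come from the short seed. I expect to open up H{\aa}stad's proof, argue via the canonical decision tree / ``coins on the table'' encoding, and carefully track what pseudorandom properties of $\rho$ suffice at each step; in particular, rather than relying on $k$-wise independent values for the fill-ins, one will likely need a nested construction whose fill-in bits come from a PRG fooling shallow decision trees or small CNFs. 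This is also where I expect the $\log\log m$ overhead to enter.

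\textbf{Step 2: Use a partition framework over depth, not over rounds.} Armed with the derandomized multi-switching lemma, I build the PRG as follows. Width-reduce the bottom layer of the given $\AC^0_d[m]$ circuit so each bottom gate has fan-in $O(\log(m/\eps))$. Then apply a pseudorandom restriction from Step 1 with keep-probability $p=1/\mathrm{polylog}(m)$; with high probability the bottom DNFs collapse to decision trees of depth $O(\log(m/\eps))$, which fuse into the layer above and reduce the effective depth by one. Iterating this $d-1$ times leaves a shallow decision tree, which is fooled by $O(\log(m/\eps))$-wise independence. The crucial partitioning step is that, rather than running an extra $\log n / \log(1/p)$ Ajtai-Wigderson rounds to fix \emph{all} variables, I partition the variable set into $\approx 1/p$ disjoint blocks per stage and assign each block pseudorandomly using one shared restriction sampler. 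This removes the extra iteration factor of $\log n$ that the Ajtai-Wigderson framework pays.

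\textbf{Step 3: Seed accounting.} Each of the $d-1$ depth-reduction stages uses a restriction sampler of seed length roughly $O(\log^{d-2}(m)\cdot \log(m/\eps)\cdot \log\log m)$ at that stage's parameters; summing across the $d-1$ stages, plus the final $O(\log(m/\eps))$-wise independent generator for the shallow decision tree, gives the claimed total of $O(\log^{d-1}(m)\cdot\log(m/\eps)\cdot\log\log m)$. A small amount of care is needed to ensure that the composition of the $d-1$ pseudorandom restrictions (each one applied on top of the simplified circuit produced by the previous one) still looks like a uniformly random restriction of the same aggregate keep-probability, so that the multi-switching lemma analysis applies at each stage.

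\textbf{Main obstacle.} The heart of the argument is Step 1. Obtaining a \emph{fully} derandomized multi-switching lemma with seed length within a $\log\log m$ factor of optimal --- as opposed to a partial derandomization which keeps true-random bits for fill-in values, or a fully derandomized version carrying a larger polylog overhead --- is precisely the open problem left by~\cite{DBLP:conf/stoc/Kelley21}. Executing this cleanly while keeping the seed length small enough that the depth iteration in Step 2 telescopes into the bound of Theorem~\ref{theo:prg-ac0} is where the bulk of the technical work should lie.
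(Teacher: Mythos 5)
Your two ingredients are the right ones, and you correctly identify the savings as replacing Ajtai--Wigderson's $\Theta(\log n/p)$ rounds by a partition into $\Theta(1/p)$ blocks, but the way you combine them has a genuine gap. The paper's construction is a one-shot recursion, not a chain of $d-1$ explicit restrictions: for depth $d$, sample a bounded-independence hash $\mathbf{H}:[n]\to[w]$ with $w=\Theta(\log m)$, draw $w$ \emph{independent} copies $\mathbf{X}_1,\dots,\mathbf{X}_w$ of the depth-$(d-1)$ PRG, and, crucially, also an independent ``noise'' string $\mathbf{Y}$ that fools CNFs, outputting $\mathbf{Y}\oplus\bigoplus_i(\mathbf{X}_i\wedge\mathbf{H}_i)$. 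Correctness is a hybrid argument over the $w$ blocks: conditioning on $\mathbf{H}$ and everything outside block $i$ yields a fixing $\mathbf{Z}_i$, and the derandomized multi-switching lemma (Lemma~\ref{lemma:multi-switching}) is applied to $C|_{\mathbf{H}_i[\mathbf{Z}_i,\star]}$. Your proposal has no analogue of $\mathbf{Y}$, and without it this step fails: conditioned on $\mathbf{H}$, the fixing $\mathbf{Z}_i$ is a patchwork of uniform and per-block pseudorandom strings whose concatenation is not obviously CNF-fooling, and $\mathbf{H}$ is itself correlated with which bits of each $\mathbf{X}_j$ appear in $\mathbf{Z}_i$. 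XORing in the independent $\mathbf{Y}$ is exactly what makes $\mathbf{Z}_i$ fool CNFs conditioned on $\mathbf{H}$, and is also why the switching lemma must be proved for correlated $(\mathbf{\Lambda},\mathbf{x})$. Your worry about whether a composition of $d-1$ restrictions ``still looks like a random restriction of the same aggregate keep-probability'' never arises here, because the argument is local, one block per hybrid step.

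Your seed accounting also does not reach the stated bound: summing $d-1$ terms of order $\log^{d-2}(m)\log(m/\eps)\log\log m$ gives, for constant $d$, still $O(\log^{d-2}(m)\log(m/\eps)\log\log m)$, a full $\log m$ factor short. The correct recursion is multiplicative, $\mathrm{seed}(d)\approx w\cdot\mathrm{seed}(d-1)+O(\log^2(m)\log(m/\eps)\log\log m)$ with $w=\Theta(\log m)$, unrolled from a depth-$2$ base case with seed $O(\log(m)\log(m/\eps)\log\log m)$. On Step~1, the one technical point worth anticipating is that the ``tester'' certifying a bad restriction should be implementable as a single CNF (a one-sided approximation of the Servedio--Tan depth-$3$ tester), so that a CNF-fooling distribution can replace truly random fill-in bits without pushing up the required fooling depth at each recursion level.
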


Our PRG construction improves two incomparable results by Servedio, Tan \cite{DBLP:conf/approx/ServedioT19} and Kelley \cite{DBLP:conf/stoc/Kelley21}. Its seed length has optimal dependence on $\frac{1}{\eps}$, and is only one ``$\log\log m$'' away from the barrier of H{\aa}stad's lower bounds \cite{DBLP:journals/acr/Hastad89, DBLP:journals/siamcomp/Hastad14a}. For the case of $d=2$, the seed length becomes $O(\log(m)\log(m/\eps)\log\log m)$, tightly matching the best-known PRG for CNFs \cite{DBLP:conf/approx/DeETT10, DBLP:conf/coco/Tal17}. 
Furthermore, if the $\log\log(m)$ term in the PRG for CNF can be shaved, then our construction directly implies PRG for depth-$d$ circuits with seed length $O(\log^{d-1}(m)\log(m/\eps))$, tightly matching current hardness bounds for $\AC^0$ circuits. Interpreted from the ``hardness-to-randomness'' perspective \cite{DBLP:journals/jcss/NisanW94}, our result has converted \emph{almost} all the ``hardness'' against $\AC^0$ into pseudorandomness for $\AC^0$.

\section{Techniques}

Our PRG crucially depends on two new technical ingredients. Both of them might be of independent interest. First, we show a template to construct PRGs based on switching lemmas\footnote{More generally, just like the Ajtai-Wigderson framework, our template can apply to any ``simplify-under-restriction'' lemmas for Boolean devices (e.g., the shrinkage lemma for De-Morgan formulae).}. Our template shares some similarities with the seminal Ajtai-Wigderson framework \cite{DBLP:journals/acr/AjtaiW89} but achieves shorter seed length. Second, improving and extending results from \cite{DBLP:conf/coco/TrevisanX13, DBLP:conf/approx/ServedioT19, DBLP:conf/stoc/Kelley21}, we show a fully-derandomized multi-switching lemma for small-width DNFs. That is to say, we give an algorithm that samples a pseudorandom restriction from a short random seed, such that a family of DNFs \emph{simultaneously} simplifies under the restriction with high probability. Applying our template with the new derandomization gives PRGs for $\AC^0$ circuits with the claimed seed length.

\paragraph*{Notation} We define some useful pieces of notation first. Let $f\colon \bits^n\to \bits$ be a function. Let $\Lambda\subseteq [n]$ be a set and $x\in \bits^n$ be a string. A set-string pair $(\Lambda, x)$ gives a restriction to $f$. The restricted function is denoted by $f|_{\Lambda[x,\star]}$ and is defined as $f|_{\Lambda[x,\star]}(y) = f(\Lambda[x,y])$, where
\[
\Lambda[x,y]_i = \begin{cases} x_i & i\notin \Lambda \\ y_i & i \in \Lambda \end{cases}.
\]
Intuitively, this means that all but the $\Lambda$ part of the input is fixed to the corresponding bits in $x$, and $f|_{\Lambda[x,\star]}$ is now only a function of those $\Lambda$ bits. 

Let ${\bf \Lambda}\subseteq [n]$ be a random variable. We say that ${\bf \Lambda}$ has marginal $p$, if for each $i\in [n]$ it holds that $\Pr[i\in \mathbf{\Lambda}] = p$. We say that $\mathbf{\Lambda}[\mathbf{U},\star]$ is a (truly) $p$-random restriction, if each $i\in [n]$ is included in $\mathbf{\Lambda}$ independently with probability $p$, and $\mathbf{U}$ is a uniformly random $n$-bit string. We always use ${\bf U}$ to denote the uniform distribution over $\bits^n$.

\subsection{A Partitioning-Based PRG}

Our almost-tight (with respect to the lower bounds barrier) seed length crucially depends on a new approach to construct PRGs, which we call a partitioning-based approach. Previous best PRGs for $\AC^0$ \cite{DBLP:conf/approx/ServedioT19,DBLP:conf/stoc/Kelley21} were built on the iterative restriction framework, developed by Ajtai and Wigderson in their seminal work \cite{DBLP:journals/acr/AjtaiW89}. Using a partitioning strategy, we get simpler proof for the correctness of our PRG with even improved seed length. We believe the partitioning-based approach could also apply to function classes beyond bounded-depth circuits.

\paragraph*{The PRG Template} We briefly describe our construction. Suppose we want to design a PRG for a circuit class $\calC_{goal}$. What we have is a pseudorandom distribution ${\bf X} \in \bits^n$ for another related circuit class $\calC_{simple}$. We further assume a derandomized ``simplify-under-restriction'' lemma: there is an integer $k\ge 1$, a real $p>0$ and a pseudorandom distribution $\mathbf{Y}\in \bits^n$ satisfying the following:
\begin{itemize}
    \item Let ${\bf\Lambda}\subseteq [n]$ be a $k$-wise independent set with marginal $p$. Then for every $\calC_{goal}$ circuit $C$, with high probability over $\bf \Lambda, Y$, the restricted circuit
    \[
    C|_{\bf \Lambda[Y, \star]}(x) := C({\bf \Lambda}[{\bf Y}, x])
    \]
    is in $\calC_{simple}$.
\end{itemize}

Then, we choose $w = \frac{1}{p}$ and let $\mathbf{H} : [n]\to [w]$ be a $k$-wise independent hash function. Denote $\mathbf{H}_i:=\mathbf{H}^{-1}(i)$ for every $i\in [w]$. Consider the following distribution
\[
\mathbf{B} = \mathbf{Y}\oplus (\mathbf{X}^{(1)}\land \mathbf{H}_1) \oplus (\mathbf{X}^{(2)} \land \mathbf{H}_2) \oplus \dots \oplus (\mathbf{X}^{(w)} \land \mathbf{H}_w).
\]
Here, $\oplus, \land$ denote bitwise $\XOR, \AND$ respectively. $\mathbf{X}^{(1)},\dots, \mathbf{X}^{(w)}$ are $w$ independent copies of $\mathbf{X}$. 

The idea is, for every $i\in [w]$, if we zoom in and check the set ${\bf H}^{-1}(i)$, we see that ${\bf H}^{-1}(i)$ is a $k$-wise independent set with marginal $\frac{1}{w}$. Let $C\in \calC_{goal}$ be a circuit that we wish to fool. Imagine that we sample all but the $\mathbf{X}^{(i)}$ part of $\mathbf{B}$ first. We then calculate $\mathbf{Z}_i :=  \mathbf{Y}\oplus \bigoplus_{j\ne i} (\mathbf{X}^{(j)}\land \mathbf{H}_j)$ and consider the restricted function $C|_{\mathbf{H_i}[\mathbf{Z}_i,\star]}$. We hope that with high probability over $\mathbf{H}$ and $\mathbf{Z}_i$, the restricted function is in $\calC_{simple}$, which can be fooled by $\mathbf{X}^{(i)}$. As we will show, this is indeed the case with one minor technicality\footnote{Specifically, note that there might be a correlation between $\mathbf{H}$ and $\mathbf{Z}_i$, because $\mathbf{H}$ determines which part of each $\mathbf{X}^{(j)}$ gets added to $\mathbf{Z}_i$. We show how to handle this issue in Section~\ref{sec:intro-switching}. See also Section~\ref{sec:PRG-proof} for the formal proof of the construction.}. Therefore, we conclude that $C$ cannot distinguish $\mathbf{B}$ from another distribution where we replace the $\mathbf{X}^{(i)}$ part in $\mathbf{B}$ with a uniform random string $\mathbf{U}^{(i)}$. Applying a hybrid argument allows us to show that $C$ fails to distinguish between $\mathbf{B}$ and
\[
\mathbf{B'} = \mathbf{Y}\oplus (\mathbf{U}^{(1)}\land \mathbf{H}_1) \oplus (\mathbf{U}^{(2)} \land \mathbf{H}_2) \oplus \dots \oplus (\mathbf{U}^{(w)} \land \mathbf{H}_w) \equiv \mathbf{U},
\]
implying that $\mathbf{B}$ fools $\calC_{goal}$ circuits.

Assume the seed length to sample $\mathbf{Y}, \mathbf{H}$ is short enough so that it would not be the bottleneck to sample $\mathbf{B}$. Then, the seed length for sampling $\mathbf{B}$ is larger than that for $\mathbf{X}$ by a factor of $w = O(1/p)$. 

\paragraph*{Application to $\AC^0$ circuits} In the next section, we will show a derandomized simplify-under-restriction lemma (i.e., the derandomized H\aa stad's multi-switching lemma) for $\AC^0$ circuits. For now let us assume the lemma, which works with parameter $\frac{1}{p} = O(\log m)$ and simplifies depth-$d$ circuits to depth-$(d-1)$ circuits with high probability. Plugging the lemma in our template gives a PRG for $\AC_d^0$ circuits with seed length longer than $\AC_{d-1}^0$-PRG by $O(\log m)$. Currently, the best PRG for depth-$2$ circuits (i.e., CNF/DNFs) has seed length $O(\log(m)\log(m/\eps)\log\log m)$ (\cite{DBLP:conf/approx/DeETT10, DBLP:conf/coco/Tal17}). Using it as our starting point, for every $d\ge 3$ we construct a PRG for depth-$d$ circuits with seed length $O(\log^{d-1}(m)\log(m/\eps)\log\log(m))$, as claimed.


\paragraph*{Comparison with the Ajtai-Wigderson framework} Ajtai and Wigderson were the first to apply restriction lemmas to construct PRGs \cite{DBLP:journals/acr/AjtaiW89}. They developed the so-called ``iterative restrictions'' framework and gave the first non-trivial PRG for $\AC^0$ circuits. Compared with the Nisan-Wigderson ``hardness-to-randomness'' framework \cite{DBLP:journals/jcss/NisanW94}, the Ajtai-Wigderson framework can ``open up'' the black box of lower bounds proof, which enables us to construct short PRGs for some delicate circuit classes (e.g., read-once $\AC^0$ formulae \cite{DBLP:conf/focs/GopalanMRTV12}). For these reasons, the Ajtai-Wigderson framework has been increasingly popular in recent years, and its applications went far beyond $\AC^0$ \cite{DBLP:conf/focs/GopalanMRTV12, DBLP:conf/coco/TrevisanX13, DBLP:journals/siamcomp/HaramatyLV18, DBLP:journals/toc/LeeV20, DBLP:conf/focs/ForbesK18, DBLP:conf/stoc/MekaRT19-width3, DBLP:journals/eccc/DoronMRTV21-monotone-BP}. 

It would be instructive to compare our approach with their framework. In the following, we briefly review their framework first. Let $t = \Theta(\log n/p)$ be a parameter. Let $\mathbf{X}^{(1)},\dots,\mathbf{X}^{(t)}$ be $t$ independent copies of $\mathbf{X}$ (the pseudorandom distribution for $\calC_{simple}$). We sample a list of random sets $\mathbf{\Lambda}^1,\dots, \mathbf{\Lambda}^{t}$ as follows.
\begin{itemize}
    \item First, sample $\mathbf{\Lambda}^1\subseteq [n]$ being a $k$-wise independent $p$-marginal subset of $[n]$.
    \item Having observed $\mathbf{\Lambda}^1$, we sample $\mathbf{\Lambda}^2\subseteq [n]\setminus \mathbf{\Lambda}^1$ in a $k$-wise independent and $p$-marginal way.
    \item For every $i\ge 3$. We first observe $\mathbf{\Lambda}^{1}\cup \dots \cup \mathbf{\Lambda}^{i-1}$ and then sample $\mathbf{\Lambda}^i \subseteq [n] \setminus (\bigcup_{j=1}^{i-1} \mathbf{\Lambda}^j)$, also in a $k$-wise independent and $p$-marginal way. 
\end{itemize}

In the real implementation, we can first sample $\mathbf{\Lambda}^{t}\subseteq [n]$ and then subtract $\bigcup_{j=1}^{i-1}\mathbf{\Lambda}^j$ from it. Given these primitives, the Ajtai-Wigderson PRG outputs
\[
\mathbf{D} = (\mathbf{X}^{(1)}\land \mathbf{\Lambda}^1) \oplus (\mathbf{X}^{(2)} \land \mathbf{\Lambda}^2) \oplus \dots \oplus (\mathbf{X}^{(t)} \land \mathbf{\Lambda}^t).
\]

We observe that with high probability, $\mathbf{\Lambda}^1\sqcup \dots \sqcup \mathbf{\Lambda}^t$ forms a partition of $[n]$. The proof of correctness is also by a hybrid argument. We observe two major differences between the Ajtai-Wigderson framework and our method.

\begin{enumerate}
    \item As an advantage, the Ajtai-Wigderson framework does not need to sample $\mathbf{Y}$, and a partial derandomization of the ``simplify-under-restriction'' lemma suffices for applying Ajtai-Wigderson. That is, it only requires that the circuit $C|_{\mathbf{\Lambda}[\mathbf{U}, \star]}$ simplifies with high probability over a partially-pseudorandom restriction $(\mathbf{\Lambda}, \mathbf{U})$, where the restriction set $\mathbf{\Lambda}$ is pseudorandom and the string $\mathbf{U}$ is truly random. 
    
    To see why this is true, we look into the analysis of the hybrid argument. For example, consider comparing the hybrid distribution
    \[
    \mathbf{D}^{(0)} = (\mathbf{U}^{(1)}\land \mathbf{\Lambda}^1) \oplus (\mathbf{U}^{(2)} \land \mathbf{\Lambda}^2) \oplus \dots \oplus (\mathbf{U}^{(t)} \land \mathbf{\Lambda}^t),
    \]
    with
    \[
    \mathbf{D}^{(1)} = (\mathbf{X}^{(1)}\land \mathbf{\Lambda}^1) \oplus (\mathbf{U}^{(2)} \land \mathbf{\Lambda}^2) \oplus \dots \oplus (\mathbf{U}^{(t)} \land \mathbf{\Lambda}^t).
    \]
    For simplicity, let us assume that $\mathbf{\Lambda}^1\sqcup \dots \sqcup \mathbf{\Lambda}^t$ always covers $[n]$. Then we have $\mathbf{D}^{(0)} = \mathbf{\Lambda}^1[\mathbf{U}, \mathbf{U}^{(1)}]$ and $\mathbf{D}^{(1)} = \mathbf{\Lambda}^1[\mathbf{U}, \mathbf{X}^{(1)}]$, which enable us to apply the partially-derandomized restriction lemma.
    
    \item However, when there is a fully-derandomized restriction lemma, using our approach results in a PRG of shorter seed length. Note that the Ajtai-Wigderson framework partitions the $[n]$ coordinates into $t = \Theta(\log n/p)$ blocks and fills in each block with independent pseudorandom strings. Since the set $\mathbf{\Lambda}^i$ covers (roughly) $p$-fraction of \emph{currently uncovered} coordinates at each time $i\in [t]$, it is crucial to set $t = \Omega(\log n/ p)$ so that $\bigcup_{j=1}^t \mathbf{\Lambda}^j$ covers $[n]$ with high probability. Our construction, on the other hand, samples a $k$-wise independent hash function $\mathbf{H}\colon [n] \to [w]$, which naturally induces a partition $\mathbf{H}^{-1}(1)\sqcup \dots \sqcup \mathbf{H}^{-1}(w)$. Then we only need $w = O(1/p)$ independent samples of $\mathbf{X}$ to complete the construction. In other words, we save the $\log(n)$ overhead by exploiting the symmetry between blocks in our design.
\end{enumerate}

\paragraph*{Concluding remarks} The partitioning-based approach is quite general: for every scenario that Ajtai-Wigderson applies, if we can prove a fully-derandomized ``simplify-under-restriction'' lemma, then we may hope to use the new framework to shave the $\log(n)$ overhead in seed length. Two more concrete examples are sparse $\mathbb{F}_2$-polynomials \cite{DBLP:conf/approx/ServedioT19} and small-size De-Morgan formulae \cite{DBLP:conf/focs/ImpagliazzoMZ12, DBLP:journals/eccc/HatamiHTT21}. However, the $\log(n)$ overhead from the Ajtai-Wigderson framework is minor in those applications. For example, the PRG for $S$-sparse $\mathbb{F}_2$-polynomials has seed length $2^{O(\sqrt{S})}$ \cite{DBLP:conf/approx/ServedioT19}. Improving a $\log(n)$ factor here is not as significant as for $\AC^0$ circuits.

In the case of (arbitrary-order) read-once branching programs, Forbes and Kelley \cite{DBLP:conf/focs/ForbesK18} have shown PRGs with seed length $O(\log^3 n)$ and $\Otilde(\log^2 n)$ for general and constant-width ROBPs, respectively. Their PRGs are based on the Ajtai-Wigderson framework, and both of them have an extra $\log(n)$ overhead in seed length due to this reason. It would be exciting to see if one can use the partitioning-based approach to give improved PRG for these models. If this turns out to be true, then we may hope for a simpler construction of nearly-logarithmic seed PRG for ROBPs of all constant width, which would be a major advance in the derandomization of small-space computation. Currently, we only know a nearly-logarithmic seed PRG for width-$3$ ROBP \cite{DBLP:conf/stoc/MekaRT19-width3}, whose proof seems hard to generalize to larger widths.

Partitioning (or called ``bucketing'' in some literature) is not a new technique in the pseudorandomness literature. There are works \cite{DBLP:journals/algorithmica/LubyV96, DBLP:journals/cc/GopalanMR13} using partitioning to design approximate counting algorithms for DNF. We also note that Meka and Zuckerman \cite{DBLP:journals/siamcomp/MekaZ13} have used a similar strategy to construct PRGs for low-degree polynomial threshold functions (PTFs). Their PRG also partitions $[n]$ coordinates into small blocks by a bounded-independent hash function and fills in each block with independent but pseudorandom bits. However, their analysis was completely different from ours, nor did they need to add a noise string ``$\mathbf{Y}$'' to fool any restriction lemma. As far as we are aware, our work is novel in using partitioning to construct PRGs based on restriction lemmas.

\subsection{Derandomized Multi-Switching Lemma}\label{sec:intro-switching}

The second technical ingredient behind our result is a fully-derandomized multi-switching lemma for small-width DNFs.

\paragraph*{H\aa stad's switching lemma} Switching lemmas are perhaps the most powerful and versatile tools in analyzing low-depth Boolean circuits, with applications ranging from proving lower bounds \cite{DBLP:journals/acr/Hastad89, DBLP:journals/siamcomp/Hastad14a, DBLP:journals/toct/Viola21-AC0-predict}, constructing pseudorandom generators \cite{DBLP:journals/acr/AjtaiW89, DBLP:conf/coco/TrevisanX13, DBLP:conf/approx/ServedioT19, DBLP:conf/stoc/Kelley21}, learning of $\AC^0$ functions \cite{DBLP:journals/jacm/LinialMN93}, designing circuit-analysis algorithms for $\AC^0$ \cite{DBLP:conf/coco/BeameIS12, DBLP:conf/soda/ImpagliazzoMP12}, to proving Fourier-analytic properties of $\AC^0$ \cite{DBLP:journals/jacm/LinialMN93, DBLP:conf/coco/Tal17}, to name a few. 

The standard switching lemma, originally proved by H\aa stad, says that for a width-$w$ DNF\footnote{The width of a DNF $F$ is defined as the maximum number of variables in any term of $F$.} $F$, if we apply a $\frac{1}{20w}$-random restriction $(\mathbf{\Lambda}, \mathbf{x})$, then with probability $1-\eps$, $F|_{\mathbf{\Lambda}[\mathbf{x},\star]}$ collapse to a decision tree of depth $O(\log(1/\eps))$. We can also prove a switching lemma for small-size DNFs: suppose $F$ is a size-$m$ unbounded-width DNF. Then, applying a $\frac{1}{\log(m/\eps)}$-random restriction collapses $F$ to a depth-$O(\log(m/\eps))$ decision tree with probability $1-\eps$.

In Section~\ref{sec:single-switching}, we show a derandomization for the standard switching lemma. That is, we prove
\begin{lemma}[Derandomized Switching Lemma, slightly-simplified]\label{lemma:intro-single}
Let $k,m\ge 1$ be integers, and $\eps, p > 0$ be reals. Let $F = \bigvee_{i=1}^m C_i$ be a size-$m$ width-$k$ DNF over inputs $\bits^n$. For all $t\ge 1$, let $(\mathbf{\Lambda}, \mathbf{x})$ be any joint random variable such that:
\begin{itemize}
    \item $\mathbf{\Lambda}$ is a $(t+k)$-wise $p$-marginal subset of $[n]$.
    \item Conditioning on $\mathbf{\Lambda}$, $\mathbf{x}$ is a random string that $\eps$-fools CNF of size at most $m$.
\end{itemize}
Consider the random restriction $F|_{\mathbf{\Lambda}[\mathbf{x},\star]}$, we have:
\[
\Pr_{\mathbf{\Lambda},\mathbf{x}}[\DT(F|_{\mathbf{\Lambda}[\mathbf{x}, \star]}) > t] \le \left( 10kp \right)^t + (4m)^{t+k} \eps.
\]
\end{lemma}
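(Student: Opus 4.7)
\medskip

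\noindent\textbf{Proof proposal.} The plan is to adapt Razborov's encoding proof of H\aa stad's switching lemma to the partially-derandomized restriction $(\mathbf{\Lambda}, \mathbf{x})$, in the same spirit as the earlier derandomizations of \cite{DBLP:conf/coco/TrevisanX13, DBLP:conf/approx/ServedioT19, DBLP:conf/stoc/Kelley21}. The key observations are: (i) the bad event $\{\DT(F|_{\mathbf{\Lambda}[\mathbf{x},\star]}) > t\}$ is covered by a structured family of events indexed by canonical paths in a canonical decision tree; (ii) each such event decomposes into a set-condition on $\mathbf{\Lambda}$ (touching few coordinates) and a value-condition on $\mathbf{x}$ (expressible as a small CNF once $\mathbf{\Lambda}$ is fixed); and (iii) the two pieces can be handled using $(t+k)$-wise independence of $\mathbf{\Lambda}$ and the CNF-fooling property of $\mathbf{x}$, respectively.

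First, I would define the canonical decision tree $T(\mathbf{\Lambda},\mathbf{x})$ computing $F|_{\mathbf{\Lambda}[\mathbf{x},\star]}$ in the standard way: process the terms $C_1,\ldots,C_m$ in order; for each term not already killed by $\rho=(\mathbf{\Lambda},\mathbf{x})$, query the free variables in $C_i\cap \mathbf{\Lambda}$ in canonical order, halting the term once a query returns a literal-killing answer, or outputting $1$ if no such answer appears. Any path of length $\ge t+1$ in $T$ is recorded, after truncation, as a canonical sequence $\sigma = ((i_1,S_1,\alpha_1),\ldots,(i_s,S_s,\alpha_s))$ with $i_j\in[m]$, $S_j$ a set of positions inside $C_{i_j}$, and $\alpha_j$ the associated query answers. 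A direct Razborov-style encoding, where each of the $\le t+k$ ``descriptors'' picks a (term, position, answer) triple from a set of size $\le 4m$, bounds the number of canonical sequences by $|\Sigma| \le (4m)^{t+k}$. Writing $E_\sigma$ for the event that $T(\mathbf{\Lambda},\mathbf{x})$ realizes $\sigma$, one has $\{\DT > t\} \subseteq \bigcup_{\sigma\in\Sigma} E_\sigma$.

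Next I would decompose $E_\sigma = E_\sigma^{(\Lambda)} \cap E_\sigma^{(x)}$. The event $E_\sigma^{(\Lambda)}$ specifies which positions inside the visited terms lie in $\mathbf{\Lambda}$ (and which do not), and is designed to touch at most $t+k$ coordinates so that $(t+k)$-wise independence of $\mathbf{\Lambda}$ gives $\Pr_\mathbf{\Lambda}[E_\sigma^{(\Lambda)}]$ exactly the same value as under a truly $p$-random $\mathbf{\Lambda}$. The event $E_\sigma^{(x)}$, conditional on $\mathbf{\Lambda}$, dictates that every visited term $C_{i_j}$ is alive under $\rho$ (all its fixed literals are satisfied) while every skipped term $C_{j'}$ with $j'\le i_s$ is killed by $\rho$ (some fixed literal is falsified); after a mild rewriting this is captured by a CNF of at most $m$ clauses (one per term), so the $\eps$-fooling of $\mathbf{x}$ against size-$m$ CNFs changes $\Pr_\mathbf{x}[E_\sigma^{(x)}\mid \mathbf{\Lambda}]$ from its truly-random value by at most $\eps$.

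Combining the two ingredients and summing over $\sigma\in\Sigma$ yields
\[
\Pr[\DT > t] \le \sum_{\sigma\in\Sigma} \Pr_{\text{truly rand.}}[E_\sigma] + |\Sigma|\cdot \eps \le (10kp)^t + (4m)^{t+k}\eps,
\]
where the first summand is just the classical truly-random switching lemma bound \cite{DBLP:journals/acr/Hastad89} re-derived by summing the per-$\sigma$ contributions. The hardest part I anticipate is reconciling two competing tensions in the definition of $E_\sigma^{(\Lambda)}$ and $E_\sigma^{(x)}$: on the $\mathbf{\Lambda}$ side, $E_\sigma^{(\Lambda)}$ must depend on only $t+k$ coordinates (so the $(t+k)$-wise hypothesis suffices) yet $\sum_\sigma \Pr[E_\sigma^{(\Lambda)}]$ must reproduce the tight $(10kp)^t$ factor; on the $\mathbf{x}$ side, $E_\sigma^{(x)}$ must genuinely be a size-$\le m$ CNF rather than blow up via many unit clauses coming from the ``alive'' assertions. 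Both tensions are resolved by a careful choice of encoding in the Razborov-H\aa stad style, lightly modified to fit the derandomized distribution.
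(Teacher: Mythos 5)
Your proposal identifies the right high-level scaffolding -- canonical decision trees, a Razborov-style witness/encoding, and splitting the bad event into a ``$\mathbf{\Lambda}$-part'' and an ``$\mathbf{x}$-part'' -- and you correctly flag the tension that needs to be resolved. But the place where you write ``Both tensions are resolved by a careful choice of encoding'' is exactly where the paper's main technical idea lives, and your sketch does not actually contain that idea. In particular, your proposed exact decomposition $E_\sigma = E_\sigma^{(\Lambda)}\cap E_\sigma^{(x)}$ with $E_\sigma^{(\Lambda)}$ depending on at most $t+k$ coordinates of $\mathbf{\Lambda}$ is false. The $\mathbf{\Lambda}$-dependence of ``the canonical decision tree realizes transcript $\sigma$'' is not just an inclusion condition: whether the free-variable set inside a visited term is \emph{exactly} $S_j$ requires both $S_j\subseteq\mathbf{\Lambda}$ \emph{and} the other (still-unqueried) positions of that term to lie \emph{outside} $\mathbf{\Lambda}$. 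Over $r$ visited width-$\le k$ terms this touches up to $rk$ coordinates, which can far exceed $t+k$, so $(t+k)$-wise independence does not let you equate $\Pr_\mathbf{\Lambda}[E_\sigma^{(\Lambda)}]$ with its truly $p$-random value. And you cannot simply drop the exclusion conditions either, because then $\sum_\sigma\Pr[E_\sigma^{(\Lambda)}]$ over your $(4m)^{t+k}$ full transcripts would no longer give the $m$-independent $(10kp)^t$ bound.

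The paper's resolution is a genuine extra mechanism that your outline omits. First, it separates the witness $(r,\ell_i,s_i,B_i,\alpha_i)$ into a \emph{partial} witness $(r,s_i,B_i,\alpha_i)$ -- dropping the term indices $\ell_i$ -- together with the observation (Remark~\ref{remark:unique-completion}) that each partial witness admits at most one completion to a full witness; this is how the $(10kp)^t$ term sums over only $(4k)^s$ objects rather than $(4m)^{t+k}$. Second, and more importantly, it introduces the witness searcher (Algorithm~\ref{algo:witness-searcher}) with an extra uniform advice string $\mathbf{y}$, pays a $2^s$ factor (Lemma~\ref{lemma:find-witness}), and then observes that the searcher depends only on $z=\rho\circ y=\mathbf{\Lambda}[\mathbf{x},\mathbf{y}]$, which is uniformly random and independent of $\mathbf{\Lambda}$ when $\mathbf{x},\mathbf{y}$ are uniform. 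After this decoupling, the $\mathbf{\Lambda}$-event really is the one-sided inclusion $\{\bigcup_j B_j\subseteq\mathbf{\Lambda}\}$ on at most $t+k$ coordinates, where $(t+k)$-wise $p$-boundedness suffices (note: the hypothesis is $p$-boundedness, an upper bound $\Pr[B\subseteq\mathbf{\Lambda}]\le p^{|B|}$, not an exact equality). Your sketch has no analogue of the advice string or the searcher, and without it the $(10kp)^t$ term cannot be recovered. The $\eps$-fooling step for $\mathbf{x}$ (enumerate full witnesses, CNF tester per witness, $(4m)^{t+k}\eps$) is the part your proposal gets essentially right.
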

See Section~\ref{sec:single-switching} for the stronger statement and the formal proof. Also note that we allow correlation between $\mathbf{\Lambda}$ and $\mathbf{x}$, which is crucial to apply the lemma in our PRG tempalte.


One can already construct a $\frac{1}{m}$-error PRG for $\AC_d^0[m]$ with seed length $O(\log^d(m)\log\log(m))$ based on Lemma~\ref{lemma:intro-single}. Let $C\in \AC^0_d[m]$ be a size-$m$ depth-$d$ circuit. We assume that each bottom-layer gate of $C$ has fan-in bounded by $O(\log(m))$ for simplicity\footnote{This assumption can be met by first applying a $\frac{1}{2}$-(pseudo)random restriction, because with high probability every bottom-layer gate with large fan-in is killed under such a restriction.}. We apply Lemma~\ref{lemma:intro-single} with $p = \frac{1}{c\log m}, t=c\log(m)$, and $\eps = 2^{-c\log^2(m)}$ for some large constant $c > 1$. We also take $\mathbf{x}$ as a pseudorandom string that $\eps$-fools CNF. Then with probability at least $1-\frac{1}{m^2}$ over the pseudorandom restriction $(\Lambda, \mathbf{x})$, every depth-$2$ sub-circuit of $C$ simplifies to a depth-$t$ decision tree, which means that we can express $C|_{\mathbf{\Lambda}[\mathbf{x},\star]}$ as a depth-$(d-1)$ circuit of size $\mathrm{poly}(m)$. Hence, assuming we can fool depth-$(d-1)$ circuit with seed length $O(\log^{d-1}(m)\log\log(m))$, then we can fool depth-$d$ circuit with seed length $O(\log^d(m)\log\log(m))$ by applying the partitioning-based PRG. Here we have omitted the seed length to sample $\mathbf{Y}$ and $\mathbf{H}$. It turns out they will not be the bottleneck: see Section~\ref{sec:PRG-proof} for the details.

\paragraph*{Multi-switching lemma} For the case that $\eps < m^{-\omega(1)}$, using Lemma~\ref{lemma:intro-single} may result in a longer seed length. In fact, to simplify the circuit with probability at least $1-\eps$, one must take the ``$t$'' parameter in Lemma~\ref{lemma:intro-single} as $\Theta(\log(m/\eps))$. Then, applying Lemma~\ref{lemma:intro-single} once simplifies $C$ to a depth-$(d-1)$ circuit with bottom fan-in bounded by $t = \Theta(\log(m/\eps))$. To further apply the lemma, one has to set $p$ as $\frac{1}{\Omega(\log(m/\eps))}$ to make the probability bound in Lemma~\ref{lemma:intro-single} non-trivial. Therefore, the depth-$d$ PRG would have seed length longer than the depth-$(d-1)$ PRG by $\frac{1}{p} = \Omega(\log(m/\eps))$, bringing the total seed length to $\Omega(\log^{d}(m/\eps)\log\log(m))$.

If we insist on using a $\frac{1}{\log(m)}$-random restriction and want to have the same $(1-\eps)$ probability guarantee, we can use the multi-switching lemma. We give its statement first.

\begin{lemma}\label{lemma:intro-multi}
Let $t,w,k,n,m\ge 1$ be integers. Let $p, \delta > 0$ be reals. Let $\mathcal{F} = \{F_1,\dots, F_m\}$ be a list of size-$m$ width-$k$ DNFs on inputs $\{0,1\}^n$. Let $(\mathbf{\Lambda}, \mathbf{x})$ be a joint random variable satisfying the following:
\begin{itemize}
    \item $\mathbf{\Lambda}$ is a $(t+k)$-wise $p$-marginal subset of $[n]$,
    \item Conditioning on $\mathbf{\Lambda}$, $\mathbf{x}$ is an $n$-bit random string that $\delta$-fools size-$(m^2)$ CNF.
\end{itemize}
Then with probability at least $1 - \left(4m^{t/w}(24pk)^{t} + (24m)^{t+k}\cdot \delta\right)$ over $(\mathbf{\Lambda}, \mathbf{x})$, there exists a common $w$-partial depth-$t$ decision tree\footnote{See Section~\ref{sec:prelim-model} for the formal definition of ``partial decision tree''.} for $\mathcal{F}|_{\mathbf{\Lambda}[\mathbf{x},\star]}$. That is, we can construct a list of decision trees $T_1,\dots, T_m$ computing $F_1|_{\mathbf{\Lambda}[\mathbf{x},\star]}, \dots, F_m|_{\mathbf{\Lambda}[\mathbf{x},\star]}$. Each $T_i$ is of depth at most $(t+w)$, and all of $T_i$'s share the same query strategy in the first $t$ queries.
\end{lemma}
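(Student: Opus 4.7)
The plan is to prove Lemma~\ref{lemma:intro-multi} by adapting H\aa stad's multi-switching encoding argument to the pseudorandom setting, using the derandomized single-switching Lemma~\ref{lemma:intro-single} as a template. First I would make precise the common partial decision tree $T^\ast$ for $\calF|_{\mathbf{\Lambda}[\mathbf{x},\star]}$, defined greedily: at each internal node, scan $F_1,F_2,\dots,F_m$ in order under the current partial assignment, locate the smallest index $i$ whose canonical decision tree has depth greater than $w$, and append the first $w$ query--response pairs along that DT's leftmost failing branch to $T^\ast$. Repeat until every $F_i$ has canonical DT of depth at most $w$ under the current partial assignment, then extend $T^\ast$ locally by each $F_i$'s shallow DT to obtain the promised list $T_1,\dots,T_m$. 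Each $T_i$ has depth at most $\mathrm{depth}(T^\ast)+w$, so the lemma reduces to upper-bounding $\Pr[\mathrm{depth}(T^\ast)>t]$.

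Second, I would set up a Razborov-style encoding. If some root-to-leaf path of $T^\ast$ has length exceeding $t$, its first $t$ queries split into $R=\lceil t/w\rceil$ consecutive chunks of length $w$, each chunk tagged by a DNF index $i_j\in[m]$ (the DNF whose canonical failure drove chunk $j$). Within each chunk, exactly as in the single-switching lemma, the queries and their responses admit a canonical \emph{failure witness}: a short list of term indices inside $F_{i_j}$ together with a length-$k$ bit-pattern per term recording which variables were ``new'' to $\mathbf\Lambda$ and how the restriction pinned the already-fixed coordinates. I would check that the number of such witnesses, summed over $(i_1,\dots,i_R)\in[m]^R$ and over all single-switching encodings inside each chunk, is at most $m^{t/w}(Ck)^t$ for an absolute constant $C$, with the new $m^{t/w}$ factor being exactly the multi-switching overhead.

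Third, for each witness $\sigma$ I would bound the probability over $(\mathbf\Lambda,\mathbf x)$ that $\sigma$ is actually realized. Such a $\sigma$ specifies at most $t$ coordinates that must lie in $\mathbf\Lambda$ and at most $k$ ``boundary'' coordinates outside $\mathbf\Lambda$ whose values in $\mathbf x$ must satisfy a short conjunction of literals; the total number of distinct coordinates touched is $t+k$ because the canonical encoding reuses boundary variables across consecutive chunks. The $(t+k)$-wise $p$-marginality of $\mathbf\Lambda$ then yields a factor of at most $p^t$ for the set-constraint, and the consistency constraint on $\mathbf x$ is computable by a CNF of size at most $m^2$ (a conjunction of at most $m$ ``no earlier-indexed DNF has already failed'' conditions, each itself a size-$m$ CNF). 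Invoking the $\delta$-fooling property of $\mathbf x$ against size-$m^2$ CNFs, summing over all witnesses, and tidying constants gives the bound $4m^{t/w}(24pk)^t+(24m)^{t+k}\delta$.

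The main obstacle I expect lies in step three: organizing the witness structure so that both (a) the ``test CNF'' on $\mathbf x$ has size $m^2$ rather than the naive $m^R$, and (b) the total number of distinct coordinates touched is $t+k$ rather than $t+Rk$, so that only $(t+k)$-wise independence of $\mathbf\Lambda$ is needed. A sloppy encoding attaches its own CNF and its own boundary variables to each chunk, blowing up both parameters. The correct fix, following Kelley's derandomization of the single-switching case, exploits canonicity: once the previous chunks' responses are pinned down, ``the next failing DNF'' and ``its next failing path'' are deterministic functions of the already-fixed variables, so validating the entire witness collapses to a single size-$m^2$ CNF check on a single block of at most $t+k$ coordinates. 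Making this bookkeeping airtight, while preserving the optimal $(Ckp)^t$ main term, is where the bulk of the technical work concentrates.
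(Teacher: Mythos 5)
Your high-level skeleton matches the paper's: a greedy common partial decision tree, a counting argument with a $m^{t/w}$ overhead for the chosen DNF indices, and a tester CNF of size $m^2$ whose value is preserved by the $\delta$-pseudorandom string $\mathbf{x}$. But two of your load-bearing steps are misdescribed in ways that would actually derail the proof. First, the ``leftmost failing branch'' version of the canonical partial decision tree does not give you the $R\le t/w$ bound you rely on. If you fix the internal responses a priori (to the leftmost-failing values) and batch-query those $w$ variables, the actual responses from $\beta$ will generally diverge from the branch you used to pick them, and there is no guarantee that you have made $w$ units of genuine progress on the chosen $F_i$, nor that each chunk actually consumes exactly $w$ of the depth budget. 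The paper instead introduces an auxiliary string $z$ that \emph{guides} the internal canonical-DT simulation (Algorithm 3), and proves separately (Lemma 5, ``powerful $(w,t)$-refutations'') that for a bad restriction there \emph{exists} a $z$ forcing at least $w$ queries per chunk. That existence statement is exactly what lets you cap $R$ at $t/w$ and is the content you need to make the $m^{t/w}$ factor legitimate; a fixed deterministic branch choice does not obviously provide it.

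Second, your accounting of the $t+k$ and the CNF size is off and internally inconsistent. You write that the witness ``specifies at most $t$ coordinates that must lie in $\mathbf{\Lambda}$ and at most $k$ `boundary' coordinates outside $\mathbf{\Lambda}$ whose values in $\mathbf{x}$ must satisfy a short conjunction of literals,'' and attribute the $t+k$ to ``reuse of boundary variables across chunks.'' In fact all coordinates appearing in the witness sets $B_{i,j}$ live \emph{inside} $\mathbf{\Lambda}$, and their total count $S=\sum_i S_i$ lies in $[t,t+k]$ simply because the inner while-loop can overshoot the budget $t$ by at most one term, i.e.\ at most $k$; that is the whole reason $(t+k)$-wise $p$-boundedness (not $t$-wise) of $\mathbf{\Lambda}$ is invoked, giving a factor $p^S$. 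Meanwhile the tester for $\mathbf{x}$ is not a ``short conjunction of literals'' on $k$ boundary coordinates at all: it must certify that every skipped term is falsified and every picked term is not falsified under the restriction, which is an $\AND$ of $\OR$s of size $\sum_i\mathsf{size}(F_{L_i})\le m^2$ over the bits of $\mathbf{x}$ outside $\mathbf{\Lambda}$. Your later clause (``a conjunction of at most $m$ conditions, each a size-$m$ CNF'') gets closer to the right object, but contradicts the earlier description; make sure you keep the $\mathbf{\Lambda}$-membership constraint (charged to $p$-boundedness) and the tester CNF constraint (charged to $\delta$-fooling) cleanly separated, and that the set of DNF-term indices $\ell_i$ inside each chunk is \emph{not} enumerated in the union bound -- it is uniquely determined by the partial witness, which is what keeps the count at $m^{t/w}(Ck)^S$ rather than something with extra $m$ factors.
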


Let $C\in \AC_d^0[m]$ be a circuit with bottom fan-in bounded by $k = \log(m)$. Let $\mathcal{F}$ denote the family of depth-$2$ sub-circuits of $C$. Apply Lemma~\ref{lemma:intro-multi} on $\mathcal{F}$ with $\frac{1}{p} = O(\log(m))$, $t = O(\log(m/\eps)), w = O(\log(m))$ and $\delta = \frac{\eps}{m^{O(t)}}$. We know that $\mathcal{F}|_{\mathbf{\Lambda}[\mathbf{x},\star]}$ fails to simplify with probability at most
\[
\left(4m^{t/w}(24pk)^{t} + (24m)^{t+k}\cdot \delta\right) \le \eps.
\]
When $\mathcal{F}$ does simplify, we can compute $C|_{\mathbf{\Lambda}[\mathbf{x},\star]}$ by a hybrid model: a depth-$t$ decision tree with $\AC_{d-1}^0[m\cdot 2^w]$-circuits on leaves. The decision tree part performs $t$ adaptive queries according to the common partial decision tree of $\mathcal{F}$. After that, functions in $\mathcal{F}$ can be expressed as depth-$w$ decision trees, which means that $C$ can be computed by an $\AC_{d-1}^0[m\cdot 2^w]$ circuit. 

If we can fool every depth-$(d-1)$ circuit on the leaves with error $\frac{\eps}{2^t}$, then we can fool this hybrid model with error $\eps$. Assuming a $O(\log^{d-2}(m)\log(m/\eps)\log\log(m))$ PRG for depth-$(d-1)$ circuits, fooling this hybrid model requires seed length
\[
O\left( \log^{d-2}(m\cdot 2^w) \log(m2^t/\eps) \log\log(m) \right) = O(\log^{d-2}(m)\log(m/\eps)\log\log(m)).
\]
This allows us to construct a PRG for depth-$d$ circuits with seed length $O(\log^{d-1}(m)\cdot \log(m/\eps)\cdot \log\log(m))$, as claimed.

\paragraph*{Proof intuition} When the restriction string is truly random, Kelley's technique \cite{DBLP:conf/stoc/Kelley21} shows a clear picture about what makes a random restriction ${\bf \Lambda}[{\bf U}, \star]$ bad. Fix an $\AC^0_d[m]$ circuit $C$. Given a restriction ${\bf \Lambda}[{\bf U}, \star]$, we want to know whether $C$ simplifies under ${\bf \Lambda}[{\bf U}, \star]$. Roughly speaking, Kelley shows that one can first observe the restriction string $\bf U$ and come up with a list of $t = \log(m)^{\log(m)}$ sets $S_1,\dots, S_t$, each of size $c\log(m)$ for a large constant $c \ge 1$. Then, we look at the set $\mathbf{\Lambda}$: $C$ fails to simplify under ${\bf \Lambda}[{\bf U}, \star]$, only when $\mathbf{\Lambda}$ contains at least one set $S_i$ from the list. Since $\mathbf{\Lambda}$ is $O(\log(m))$-wise $\frac{1}{c\log(m)}$-marginal, this happens with probability at most $2^{-c\log(m)}$ by a simple union bound.

We apply Kelley's technique to derandomize the multi-switching lemma \cite{DBLP:journals/siamcomp/Hastad14a}. The proof of the multi-switching lemma involves many tricks and technicalities. Here we try to give some (over-simplified) intuition. At a very high level, the multi-switching lemma is proved by combining the standard switching lemma with a union bound. If $\mathcal{F}|_{\rho}$ fails a have $w$-partial depth-$t$ decision tree, then there is a subset of \emph{at most} $\frac{t}{w}$ formulae in $\mathcal{F}_{\rho}$, such that the summation of their decision tree complexities exceeds $t$. This is because every formula with decision tree complexity no larger than $w$ can be handled ``for free''. Therefore, each bad formula contributes at least $w$ to the summation. There are at most $m^{t/w} = 2^{O(t)}$ such subsets. For each of them, we bound the probability that the summation of their DT complexities exceeds $t$ by $O(kp)^t$. This step is rather similar (in spirit) to the case of standard switching lemma, and Kelley's technique applies.

The final piece in our analysis is the full derandomization. By Kelley's technique, we know that the partially-pseudorandom restriction $\mathbf{\Lambda}[\mathbf{U}, \star]$ is as good as a truly random one. We further derandomize the random string by using the techniques by Trevisan, Xue, and by Servedio, Tan \cite{DBLP:conf/coco/TrevisanX13, DBLP:conf/approx/ServedioT19}. Specifically, they constructed bounded-depth circuits (called ``testers'') that take a restriction as input and decide whether the restriction is good or not (for simplifying the target circuit). Now, consider sampling a string $\mathbf{x}$ from a distribution that fools bounded-depth circuits, Given the tester, we can show that $\mathbf{\Lambda}[\mathbf{x}, \star]$ is as good as $\mathbf{\Lambda}[\mathbf{U}, \star]$. Since fooling higher-depth circuits requires longer random bits, to control the final seed length of our PRG, we need the tester to be implementable in $\AC_2^0$. For the standard switching lemma, the Trevisan-Xue tester \cite{DBLP:conf/coco/TrevisanX13} does have depth $2$. For the multi-switching lemma, things become a bit trickier: the Servedio-Tan tester \cite{DBLP:conf/approx/ServedioT19} was designed as a depth-$3$ circuit and did the test faithfully. We (implicitly) implemented a ``upper-side approximator'' of their tester. Our one-sided tester can be expressed as a CNF, and is equally useful when upper-bounding the probability of picking bad restrictions.

See our derandomized switching lemmas for the standard and multi-switching versions in Section~\ref{sec:single-switching} and Section~\ref{sec:multi-switching}, respectively. Instead of playing with decision trees and tracing down query paths, we strive to present the proof based on the ``canonical query algorithm''. Our proof is more operational and, in our opinion, easier to follow.

\paragraph*{Comparison with previous works} For the task of designing PRGs for $\AC^0$ circuits, before our result, there were two incomparable results on the frontier, one by Servedio and Tan \cite{DBLP:conf/approx/ServedioT19} and the other by Kelley \cite{DBLP:conf/stoc/Kelley21}. Both of them were built on derandomization results for switching lemmas.

The Servedio-Tan PRG is based on a derandomization of H\aa stad's multi-switching lemma~\cite{DBLP:journals/siamcomp/Hastad14a}, and has seed length $\log^{d+O(1)}(m)\log(1/\eps)$ when $\eps$-fooling $\AC^0_d[m(n)]$. Due to the usage of multi-switching lemma, their PRG has optimal dependence on the error parameter $\eps$. However, their derandomization of the switching lemma is weaker in seed length. The two factors together determined the final seed length of their PRG.

Kelley's PRG, on the other hand, is based on a stronger derandomization of the standard switching lemma \cite{DBLP:journals/acr/Hastad89}. It has seed length $\Otilde(\log^{d}(m/\eps)\log(n))$. The exponent on $\log(m)$ matches the lower bound barrier, credit to the fact that their stronger derandomization allows one to sample a restriction using a much shorter seed. However, Kelley only showed a partial derandomization, which is not applicable in our construction. Also, the dependence on $\frac{1}{\eps}$ is inferior due to the somewhat coarse analysis in the standard switching lemma. It was left as an open question in \cite{DBLP:conf/stoc/Kelley21} whether one can get the same high-equality derandomization of the multi-switching lemma and optimize the dependency on $\frac{1}{\eps}$.

We answer this question in the affirmative by showing a fully-derandomized multi-switching lemma that improves both works. Combined with the partitioning-based PRG framework, our lemma gives a PRG for $\AC^0$ with an almost tight seed length. We hope our derandomization of the switching lemmas could find applications in other contexts.

Finally, we remark that the ``decision-tree-followed-by-circuit'' type hybrid model also appears in many previous works. The applications include proving correlation bounds and Fourier spectrum bounds \cite{DBLP:journals/siamcomp/Hastad14a, DBLP:conf/coco/Tal17}, constructing PRGs \cite{DBLP:conf/approx/ServedioT19, DBLP:journals/eccc/HatamiHTT21}, designing circuit-analysis algorithms \cite{DBLP:journals/toc/ChenS018}, etc.

\section{Preliminaries}\label{sec:prelim}

In this section, we set up necessary pieces of notation, and review some well-known and useful facts from the literature of pseudorandmoness and complexity theory.

\subsection{Restrictions, Partial-Assignments and Strings}
We use the term ``restriction'' and ``partial assignment'' interchangeably. Both of them refer to a string of the form $\rho\in \{0,1, \star\}$. Here, if $\rho_i = 0/1$, it means the $i$-th bit of $\rho$ is fixed to that value. Otherwise, the $i$-th bit of $\rho$ is unfixed.

For two partial assignments $\rho,\sigma \in \{0,1, \star\}$, define their composition $\rho\circ \sigma$ as:
$$
(\rho\circ \sigma)_i = \begin{cases} \rho_i & \rho_i\ne \star \\ \sigma_i & o.w. \end{cases}.
$$
Note that the left partial assignment always has a higher priority than the right one.

Let $\Lambda\subseteq [n]$ be a set. For two partial assignments $\rho, \sigma \in \{0,1,\star\}^n$, let $\Lambda[\rho, \sigma]$ be the assignment defined as 
$$
\Lambda[\rho, \sigma]_i = \begin{cases} \rho_i, & i\notin \Lambda \\ \sigma_i, & i \in \Lambda \end{cases}.
$$
Let $f:\bits^n\to \bits$ be a function and $\rho \in \{0,1,\star\}^n$ be a partial assignment. We use $f|_{\rho}\colon \bits^n\to \bits$ to denote the restriction of $f$ on $\rho$. That is, $f|_\rho(y) := f(\rho \circ y)$.

\subsection{Computational Models}\label{sec:prelim-model}

$\AC^0_d[s(n)]$ denotes the family of Boolean circuits of size at most $s(n)$ and depth at most $d$. Such circuits can have $\AND,\OR, \NOT$ gates. Here, $\NOT$ gates do not count in the depth, and $\AND,\OR$ gates can have unbounded fan-in. We measure the size of a circuit by the total number of wires (including input wires) in it. For the case of $d=2$, we also use the terms DNF and CNF to refer to $\OR\circ\AND$ and $\AND\circ\OR$ circuits respectively. The width of a DNF or CNF is defined as the maximum of its bottom fan-in. We also use $k$-DNF (resp. $k$-CNF) to denote DNF (resp. CNF) of width at most $k$.

A decision tree $T$ is a binary tree. Each inner node of $T$ is labelled with an index $i\in [n]$, and has exactly two children, which are labelled with $0$ and $1$. Each leaf of $T$ is labelled with a Boolean value $b\in \{0,1\}$. A decision tree $T$ computes a function in the following manner: on an input $x\in \bits^n$, we start from the root of $T$. In each turn we observe the index $i$ of current node, query $x_i$ and move to the left/right child depending on the bit $x_i$ we received. Once we reach a leaf with label $b$, we output $T(x) = b$. The depth of a decision tree is the length of the longest path from root to any leaf.

We also consider a special decision tree model for \emph{a list of functions}. See the definition below.

\begin{definition}\label{def:partial-dt}
Let $t,w,n,m\ge 1$ be integers. Let $\mathcal{F} = \{F_1,\dots, F_m\}$ be a list of functions mapping from $\bits^n$ to $\{0,1\}$. A $w$-partial depth-$t$ decision tree for $\mathcal{F}$ is a depth-$t$ decision tree $T$ satisfying the following. For every $F_i \in \mathcal{F}$ and every leaf $\ell$ of $T$, let $\alpha\in \{0,1,\star\}^n$ be the partial assignment that corresponds to $\ell$ (That is, $\alpha_i$ equals $\star$ if $T$ does not query $x_i$ before reaching $\ell$, otherwise $\alpha_i$ equals to the value that leads $T$ to move towards $\ell$). Then it holds that $\DT(F_i|_{\alpha}) \le w$.
\end{definition}

\subsection{Pseudorandomness}

Recall the definition of bounded independence.

\begin{definition}
Let $n, m$ be two integers. Let $\calH$ be a distribution over hash functions mapping $\bits^n$ into $\bits^m$. We say that $\calH$ is $k$-wise independent, if for any $k$ input-output pairs $(x_1,y_1),\dots, (x_k, y_k) \in \bits^{n}\times \bits^m$ where $x_1,\dots, x_t$ are distinct, 
it holds that
\[
\Pr_{h\sim \calH} [ \forall i\in [k], h(x_i) = y_i ]  = 2^{-km}.
\]
\end{definition}

We have the following standard construction of bounded independence hash functions (check e.g., \cite[Chapter~3.5.5]{DBLP:journals/fttcs/Vadhan12-pseudorandomness}).

\begin{lemma}
For every $n, m, k\ge 1$, there is an explicit $k$-wise independent hash functions $\calH$ that maps $\bits^n$ into $\bits^m$. One can sample a function in $\calH$ using $O(k(n+m))$ random bits.
\end{lemma}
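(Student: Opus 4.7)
The plan is to use the classical Reed--Solomon-style polynomial construction of $k$-wise independent hashes. First, set $\ell := \max(n, m)$ and work over the finite field $\mathbb{F}_q$ with $q = 2^{\ell}$, which we identify with $\bits^{\ell}$ via a fixed $\mathbb{F}_2$-basis (this identification is explicit given a concrete irreducible polynomial of degree $\ell$ over $\mathbb{F}_2$, which is a classical algorithmic fact). The seed will consist of $k$ uniform and independent coefficients $(a_0, a_1, \ldots, a_{k-1}) \in \mathbb{F}_q^k$, which takes $k\ell = O(k(n+m))$ bits. Given a seed, the hash function $h : \bits^n \to \bits^m$ is defined as follows: embed the input $x \in \bits^n$ into $\mathbb{F}_q$ by zero-padding to $\ell$ bits, evaluate $p(x) = \sum_{i=0}^{k-1} a_i x^i$ in $\mathbb{F}_q$, and output the first $m$ bits of $p(x)$ under the fixed identification.

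To prove $k$-wise independence, I would argue as follows. Fix any $k$ distinct inputs $x_1, \ldots, x_k \in \bits^n$ and any targets $y_1, \ldots, y_k \in \bits^m$. Since $\ell \ge n$, the embedding is injective, so the $x_i$'s remain distinct in $\mathbb{F}_q$. The evaluation map $(a_0, \ldots, a_{k-1}) \mapsto (p(x_1), \ldots, p(x_k))$ is an $\mathbb{F}_q$-linear bijection because the corresponding Vandermonde matrix, whose rows are determined by the distinct $x_i$'s, is invertible. Hence sampling the $a_i$'s uniformly makes $(p(x_1), \ldots, p(x_k))$ uniform on $\mathbb{F}_q^k$. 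The projection from $\mathbb{F}_q \cong \bits^{\ell}$ onto the first $m$ bits is a uniform $2^{\ell - m}$-to-$1$ map, so
\[
\Pr\!\left[\forall i \in [k]:\ h(x_i) = y_i\right] \;=\; \left(\frac{2^{\ell-m}}{2^{\ell}}\right)^{k} \;=\; 2^{-km},
\]
which matches the definition of $k$-wise independence.

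There is no real obstacle in the argument; it is entirely textbook. The only pieces of care are: (i) bookkeeping the seed length, i.e., noting $\max(n,m) \le n+m$ so that the coefficient sampling cost $k\cdot \max(n,m)$ fits within the target $O(k(n+m))$ bound; and (ii) ensuring the field arithmetic is explicit, which follows from deterministic constructions of irreducible polynomials over $\mathbb{F}_2$ of any desired degree.
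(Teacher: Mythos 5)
Your proof is correct and is essentially the standard textbook argument (low-degree polynomial evaluation over $\mathbb{F}_{2^{\ell}}$ with $\ell = \max(n,m)$, followed by truncation to $m$ bits), which is the construction in the reference the paper cites for this lemma; the paper itself states the lemma without proof. The Vandermonde bijection, the $2^{\ell-m}$-to-$1$ projection counting, and the seed-length bookkeeping via $\max(n,m)\le n+m$ are all handled correctly, and your remark that explicitness reduces to a deterministic construction of an irreducible polynomial over $\mathbb{F}_2$ of the desired degree is the right thing to flag.
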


We also consider a weaker notion of pseudorandomness called ``$k$-wise $p$-boundedness'', first defined and studied by \cite{DBLP:conf/stoc/Kelley21}.

\begin{definition}
Suppose $\mathbf{\Lambda}$ is a random subset of $[n]$. We say that $\mathrm{\Lambda}$ is $k$-wise $p$-bounded if for any set $B\subseteq [n]$ of size at most $k$, it holds that $\Pr_{\mathbf{\Lambda}}[B\subseteq \mathbf{\Lambda}] \le p^{|B|}$.
\end{definition}
For intuition, if we sample $\mathbf{\Lambda}$ by independently including each $i$ in $\mathbf{\Lambda}$ with probability at most $p$, then $\mathbf{\Lambda}$ is $n$-wise $p$-bounded.
\section{Improved PRG for Constant-Depth Circuits}\label{sec:PRG-proof}

In this section, we aim to prove the main theorem, re-stated below.

\begin{reminder}{Theorem~\ref{theo:prg-ac0}}
For every $d \ge 2$ the following is true. For every $m,n\in \mathbb{N}$ such that $m\ge n$ and every $\varepsilon > 0$, there is an $\eps$-PRG for $\AC_d^0[m]$ circuits with seed length $O(\log^{d-1}(m)\cdot \log(m/\eps)\cdot \log\log(m))$.
\end{reminder}

We start with the following fact, which is crucial in our construction.
\begin{theorem}[\cite{DBLP:conf/approx/DeETT10, DBLP:conf/coco/Tal17}]\label{theo:prg-cnf}
For every $m,n\in \mathbb{N}$ such that $m\ge n$ and every $\varepsilon > 0$, there is an $\eps$-PRG for $\AC_2^0[m]$ circuits (namely, CNF/DNF formulae) with seed length $O(\log(m)\cdot \log(m/\eps)\cdot \log\log(m))$.
\end{theorem}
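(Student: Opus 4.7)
The plan is to follow Tal's Fourier-analytic approach~\cite{DBLP:conf/coco/Tal17}, which sharpens the pseudorandom-restriction argument of De--Etesami--Trevisan--Tulsiani~\cite{DBLP:conf/approx/DeETT10}. The cornerstone is a level-$k$ Fourier weight bound for CNFs: for every size-$m$ CNF $F\colon \bits^n \to \bits$ and every integer $k \ge 1$,
\[
L_{1,k}[F] \;:=\; \sum_{|S|=k} |\hat{F}(S)| \;\le\; (C \log m)^{k},
\]
for some absolute constant $C$. I would establish this bound by applying a $p$-random restriction with $p = \Theta(1/\log m)$, invoking H{\aa}stad's switching lemma to see that $F|_\rho$ collapses to a shallow decision tree with overwhelming probability, and then translating the resulting decision-tree depth bound into the above $L_1$ estimate via the standard identity relating each Fourier coefficient of $F$ to an average of coefficients of $F|_\rho$.

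Given this bound, I would construct the PRG as $\mathbf{X} = \mathbf{Y} \oplus \mathbf{Z}$, where $\mathbf{Y}$ is a $t$-wise independent string with $t = \Theta(\log(m/\eps))$ and $\mathbf{Z}$ is drawn from an $\eps'$-biased distribution with a carefully chosen $\eps' \approx \eps/\mathrm{poly}(m)$. The fooling error decomposes as
\[
\left| \Ex_{\mathbf{X}}[F] - \Ex_{\mathbf{U}}[F] \right| \;\le\; \sum_{S\ne\emptyset} |\hat{F}(S)| \cdot |\mathrm{bias}_{\mathbf{X}}(S)|.
\]
Characters of degree at most $t$ are killed exactly by $\mathbf{Y}$; for higher levels $k > t$, the contribution is at most $L_{1,k}[F]\cdot\eps'$, and Tal's level-$k$ bound (together with an auxiliary $L_2$ level inequality used to control the very-high-degree tail) makes the resulting geometric series in $k$ convergent. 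A careful choice of $t$ and $\eps'$ then bounds the total error by $\eps$.

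The total seed length comes from two pieces: the $t$-wise independent $\mathbf{Y}$ costs $O(t\log n)$ and $\mathbf{Z}$ costs $O(\log(n/\eps'))$. Tal's sharper accounting partitions the Fourier spectrum into $\Theta(\log\log m)$ geometrically-spaced bands and allocates a different small-bias parameter to each band, which trades off the two costs optimally and yields total seed length $O(\log m \cdot \log(m/\eps) \cdot \log\log m)$. I expect the main obstacle to be establishing the level-$k$ Fourier bound with the sharp $\log m$ base: this requires a genuinely Fourier-side switching argument rather than a black-box application of the standard switching lemma (the latter would only give the weaker $(C\log(m/\eps))^k$ scaling), and is Tal's principal technical contribution. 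Once that Fourier bound is in place, assembling the final PRG by combining small-bias with bounded independence is a standard Fourier calculation.
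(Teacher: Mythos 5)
The paper does not prove Theorem~\ref{theo:prg-cnf}; it imports the result as a black box from the cited works, so there is no in-paper proof to compare your sketch against. What follows is an assessment of how faithfully your sketch reconstructs the argument from [De--Etesami--Trevisan--Tulsiani, Tal].

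Your high-level orientation is right: Tal's level-$k$ Fourier bound $L_{1,k}[F] \le O(\log m)^{k}$ for size-$m$ CNFs is indeed the ingredient that separates the $\log m$ and $\log(1/\eps)$ contributions, and you correctly flag that this sharp base cannot be obtained by a black-box application of the switching lemma. However, the construction step as you describe it has a genuine gap. If $\mathbf{X} = \mathbf{Y}\oplus\mathbf{Z}$ with $\mathbf{Y}$ $t$-wise independent and $\mathbf{Z}$ $\delta$-biased, the fooling error decomposes as $\delta\sum_{k>t} L_{1,k}[F]$, and this sum is \emph{not} controllable from the level bound alone: the geometric series $\sum_{k}(C\log m)^k$ diverges, and past the crossover point $k \approx n/\log^2 m$ the trivial Cauchy--Schwarz bound $L_{1,k}[F]\le \sqrt{\binom{n}{k}}$ takes over, contributing $2^{\Theta(n/\log m)}$ to the tail. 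An auxiliary $L_2$ tail estimate (as you mention) does not rescue this, because the $L_1$-to-$L_2$ conversion at a fixed level $k$ already costs the $\sqrt{\binom{n}{k}}$ factor. The cited works avoid this by working with low-degree $L_1$-bounded \emph{sandwiching polynomials} for $F$ rather than with the Fourier tail of $F$ itself; the seed length is then governed by the $L_1$ norm of the sandwiching approximators, and it is that quantity (not a ``$\Theta(\log\log m)$ geometric banding'' of the spectrum, which as far as I can tell is not a step in either cited paper) that accounts for the $\log\log m$ overhead. To make your sketch correct you would need to re-route it through the sandwiching framework: construct degree-$O(\log(m/\eps))$ upper and lower approximators for $F$ with $L_1$ norm $2^{O(\log(m/\eps)\log\log m)}$, at which point a single small-bias generator (plus bounded independence to get the degree threshold for free) yields the claimed $O(\log m\cdot\log(m/\eps)\cdot\log\log m)$ seed length.
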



We give the formal statement of the derandomized multi-switching lemma below. This is the full version of Lemma~\ref{lemma:intro-multi} in the introduction.

\begin{lemma}\label{lemma:multi-switching}
Let $t,w,k,n,m\ge 1$ be integers. Let $p, \eps > 0$ be reals. Let $\mathcal{F} = \{F_1,\dots, F_m\}$ be a list of size-$m$ $k$-DNFs on inputs $\{0,1\}^n$. Let $(\mathbf{\Lambda}, \mathbf{x})$ be a joint random variable satisfying the following:
\begin{itemize}
    \item $\mathbf{\Lambda}$ is a $(t+k)$-wise $p$-bounded subset of $[n]$,
    \item Conditioning on $\mathbf{\Lambda}$, $\mathbf{x}$ is an $n$-bit random string that $\eps$-fools size-$(m^2)$ CNF.
\end{itemize}
Then we have
\[
\Pr_{T\sim \mathbf{\Lambda}, x\sim \mathbf{x}}[\mathcal{F}|_{\Lambda[x,\star]} \text{ does not have $w$-partial depth-$t$ DT}] \le 4m^{t/w}(24pk)^{t} + (24m)^{t+k}\cdot \eps.
\]
\end{lemma}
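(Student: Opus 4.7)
The plan is to adapt H\aa stad's multi-switching lemma proof via a \emph{canonical query algorithm}, then derandomize it in two stages: Kelley's trick handles the $(t+k)$-wise $p$-bounded random set $\mathbf{\Lambda}$, while a CNF \emph{tester} built in the style of Trevisan--Xue \cite{DBLP:conf/coco/TrevisanX13} and Servedio--Tan \cite{DBLP:conf/approx/ServedioT19} handles the pseudorandom string $\mathbf{x}$.

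\textbf{Canonical algorithm and blame witnesses.} I would first define an algorithm that reads $\calF|_{\mathbf{\Lambda}[\mathbf{x},\star]}$ one DNF at a time and tries to build a common $w$-partial depth-$t$ decision tree by appending shared queries in a canonical order (for instance, following the variables of the first not-yet-killed term of the first uncollapsed $F_i$). By the standard argument behind the multi-switching lemma, if the algorithm exceeds $t$ shared queries then one can extract a \emph{blame pattern}: a set $S\subseteq [m]$ of at most $\lceil t/w\rceil$ formulae that account for the excess, plus, for each $i\in S$, a short certificate listing the sequence of terms of $F_i$ that got ``hit'' by $\mathbf{\Lambda}$ and the partial-assignment data that explains why those terms survived. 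A careful counting of patterns (indices of the blamed formulae, indices of their blamed terms, and at most one assignment bit per certified literal) gives a total of at most $(24m)^{t+k}$ blame patterns.

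\textbf{CNF tester and derandomization of $\mathbf{x}$.} For each blame pattern $\pi$ I would construct a one-sided tester $T_\pi\colon \bits^n\to \bits$ such that whenever the canonical algorithm fails on $(\mathbf{\Lambda},\mathbf{x})$ with pattern $\pi$ the tester accepts $\mathbf{x}$. Each atomic check in $T_\pi$ is ``a fixed conjunction of at most $k$ literals holds on $\mathbf{x}$'', and the outer connective is a global AND over all certificates, so $T_\pi$ is a CNF of size at most $m^2$. The key design choice, departing from Servedio--Tan's depth-$3$ tester, is to take $T_\pi$ as an \emph{upper approximator}: we only insist that the literal patterns fire, without also verifying that the canonical algorithm actually chose to query them. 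This overcounts failures, which is harmless for an upper bound and yet collapses the tester to depth~$2$. Since $\mathbf{x}$ $\eps$-fools every size-$m^2$ CNF, triangle inequality over the $(24m)^{t+k}$ testers incurs the additive term $(24m)^{t+k}\eps$.

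\textbf{Kelley-style bound under uniform $\mathbf{x}$.} Replacing $\mathbf{x}$ by a uniformly random $\mathbf{U}$, I need to bound $\Pr_{\mathbf{\Lambda},\mathbf{U}}[T_\pi(\mathbf{\Lambda}[\mathbf{U},\star]){=}1]$ summed over $\pi$. Following Kelley, I would first expose $\mathbf{U}$; conditional on $\mathbf{U}$, the event that $T_\pi$ fires reduces to $\mathbf{\Lambda}$ containing several specific subsets $B_1,\dots,B_\ell\subseteq[n]$ whose total size is at least $t$ (each $B_j$ records the coordinates of a certified literal sequence inside $\mathbf{\Lambda}$). By $(t+k)$-wise $p$-boundedness, the probability of containment is at most $p^{\sum_j |B_j|}\le p^t$. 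A careful union bound over the $\binom{m}{\le t/w}\le m^{t/w}$ blame sets, the certified term choices, and the matching literal choices (each of which contributes an $O(k)$ multiplicative factor per query) reproduces exactly the standard multi-switching count and yields the leading term $4m^{t/w}(24pk)^t$. Combining this with the previous paragraph gives the lemma.

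\textbf{Main obstacle.} The delicate step is designing the canonical algorithm together with its blame encoding so that (i) the encoding is short enough that the number of patterns is bounded by $(24m)^{t+k}$, (ii) each tester can be flattened to a CNF rather than a deeper circuit, and (iii) the Kelley-style bound on uniform $\mathbf{U}$ still matches H\aa stad's $(O(pk))^t$ up to the same constant. Flattening to depth~$2$ is essential because our PRG template recurses on circuit depth: any extra depth in the tester would cost another $\log m$ factor in the final seed length, negating the main quantitative improvement of the paper.
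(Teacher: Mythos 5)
Your plan follows the same route as the paper: define a canonical query procedure that tries to build a common $w$-partial depth-$t$ decision tree, extract a ``blame'' object (the paper calls these global witnesses) consisting of $\leq t/w$ formulae together with per-formula certificates of total size in $[t,t+k]$, union-bound over the $\leq (24m)^{t+k}$ such objects, test each one with a size-$m^2$ CNF (rather than Servedio--Tan's depth-$3$ tester) to pay the additive $(24m)^{t+k}\eps$, and bound the remaining probability over $(\mathbf{\Lambda},\mathbf{U})$ by $4m^{t/w}(24pk)^t$ via $(t+k)$-wise $p$-boundedness. This is the paper's proof in outline, and the leading-term counting ($m^{t/w}$ for the blamed formulae, $O(k)^S$ for the certificate structure) is the same.

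One step is glossed over and deserves a warning, because it is where the paper does something genuinely non-obvious. You write ``I would first expose $\mathbf{U}$; conditional on $\mathbf{U}$, the event that $T_\pi$ fires reduces to $\mathbf{\Lambda}$ containing several specific subsets $B_1,\dots,B_\ell$.'' As stated this is not quite right: the sets $B_j$ recorded in a witness are determined by the responses the canonical algorithm receives to its queries, and those queries land \emph{inside} $\mathbf{\Lambda}$, so they are not determined by $\mathbf{U}$ alone (which fixes only the coordinates outside $\mathbf{\Lambda}$). The paper resolves this with a decoupling device: a witness-searcher algorithm that is fed an independent uniform advice string $\mathbf{y}$ in addition to the restriction, and which one can verify depends only on the composed string $z=\mathbf{\Lambda}[\mathbf{x},\mathbf{y}]$ and not on which coordinates of $z$ came from which source. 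Since $z\sim\calU_n$ when $\mathbf{x},\mathbf{y}\sim\calU_n$, one can condition on $z$ (not on $\mathbf{U}$), at which point the $B_j$'s really are fixed, and then $(t+k)$-wise $p$-boundedness gives $p^{\sum|B_j|}$; the searcher succeeds with probability exactly $2^{-S}$ over $\mathbf{y}$, which is where the extra $2^S$ in the constant comes from. Your high-level outline is compatible with this, but ``expose $\mathbf{U}$'' needs to be replaced by ``expose the full uniform string $z$, obtained by coupling in an advice string'' to make the step correct.

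Two smaller remarks. First, the paper's per-pattern CNF tester actually checks the exact predicate ``this tuple is a global witness for $\mathbf{\Lambda}[x,\star]$,'' including the selection conditions (all earlier terms falsified, chosen term alive); it is still a CNF because these checks are ORs over literals and their conjunction flattens to a single AND-of-ORs. So the ``upper approximation'' in the paper is at the aggregate level (union-bounding over witness tuples, only a necessary condition for failure), not in weakening the per-tuple test. If you literally drop the selection checks from $T_\pi$, you must re-argue that $\sum_\pi\Pr_{\mathbf{U}}[T_\pi]$ is still $4m^{t/w}(24pk)^t$; that bound comes from the witness-searcher calculation, not from the CNF itself, so it would survive, but you should be aware the two testers are not the same object. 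Second, the paper also needs a unique-completion observation (a partial witness extends to at most one full witness) to move between the per-full-witness enumeration, where the $\eps$-fooling is applied, and the per-partial-witness enumeration, where the $2^S p^S$ bound lives; this is worth making explicit when you tighten the sketch.
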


We defer the proof of Lemma~\ref{lemma:multi-switching} to Section~\ref{sec:multi-switching}. Assuming Lemma \ref{lemma:multi-switching}, we prove Theorem \ref{theo:prg-ac0}. We prove a slightly stronger form of Theorem~\ref{theo:prg-ac0}, stated below.

\begin{theorem}\label{theo:prg-ac0-fanin}
For every $d \ge 2$ the following is true. For every $m,n\in \mathbb{N}$ such that $m\ge n$, $k\in \mathbb{N}$ and every $\varepsilon > 0$, there is an $\eps$-error, $O((\log^2(m)+k\cdot \log^{d-2}(m))\cdot \log(m/\eps)\cdot \log\log(m))$-seed PRG for $\AC_d^0[m]$ circuits with bottom fan-in bounded by $k$.
\end{theorem}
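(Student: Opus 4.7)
The plan is to induct on depth $d$. The base case $d=2$ is immediate from Theorem~\ref{theo:prg-cnf}, which achieves seed length $O(\log m\cdot\log(m/\eps)\log\log m)$, comfortably within the target $O((\log^2 m+k)\log(m/\eps)\log\log m)$ for any $k$. For the inductive step $d\ge 3$, I will instantiate the partitioning template from the introduction with the derandomized multi-switching lemma. Fix parameters $p=1/(48k)$, $w=1/p=\Theta(k)$, partial-DT width $w'=2\log m$, common-DT depth $t=\Theta(\log(m/\eps))$, fooling parameter $\delta=\Theta(\eps/(w(24m)^{t+k}))$, and per-copy error $\eps_2=\eps/(2w\cdot 2^t)$. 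The generator is
\[
\mathbf{B}=\mathbf{Y}\oplus\bigoplus_{j=1}^w(\mathbf{X}^{(j)}\land\mathbf{H}_j),
\]
where $\mathbf{Y}$ is a $\delta$-PRG for $\AC_2^0[m^2]$ (by Theorem~\ref{theo:prg-cnf}), $\mathbf{H}\colon[n]\to[w]$ is a $(t+k)$-wise independent hash, and each $\mathbf{X}^{(j)}$ is an independent $\eps_2$-PRG for $\AC_{d-1}^0[M]$ with bottom fan-in $w'$, where $M=O(tm\cdot 2^{w'})=\mathrm{poly}(m)$, obtained from the inductive hypothesis.

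For correctness, fix $C\in\AC_d^0[m]$ with bottom fan-in $k$ and use a $w$-step hybrid replacing each $\mathbf{X}^{(i)}$ by a fresh uniform $\mathbf{U}^{(i)}$. In the $i$-th step, condition on $\mathbf{H}$ and $\{\mathbf{X}^{(j)}\}_{j\ne i}$, set $\mathbf{Z}_i:=\mathbf{Y}\oplus\bigoplus_{j\ne i}(\mathbf{X}^{(j)}\land\mathbf{H}_j)$, and apply Lemma~\ref{lemma:multi-switching} with $(\mathbf{\Lambda},\mathbf{x})=(\mathbf{H}_i,\mathbf{Z}_i)$ to the family of depth-$2$ sub-circuits of $C$ (each a $k$-DNF, or a $k$-CNF handled by negation). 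The set $\mathbf{H}_i$ is $(t+k)$-wise $p$-bounded because $\mathbf{H}$ is $(t+k)$-wise independent, and on $[n]\setminus\mathbf{H}_i$ the string $\mathbf{Z}_i$ equals $\mathbf{Y}$ XORed with a string fixed by the conditioning, so it $\delta$-fools size-$m^2$ CNFs (the CNF class is closed under XOR-by-constant, and a mixture of $\delta$-fooling distributions is still $\delta$-fooling---this handles conditioning on $\mathbf{H}_i$ alone rather than on all of $\mathbf{H}$). Lemma~\ref{lemma:multi-switching} then yields, except with probability $\eps/(2w)$, a common $w'$-partial depth-$t$ decision tree, so $C|_{\mathbf{H}_i[\mathbf{Z}_i,\star]}$ collapses to a depth-$t$ decision tree whose leaves host $\AC_{d-1}^0[M]$ circuits of bottom fan-in $w'$ (each restricted depth-$2$ sub-circuit is written as the width-$w'$ DNF or CNF matching the gate directly above, so those two layers merge). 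To fool this hybrid via $\mathbf{X}^{(i)}$, note that for each leaf $\ell$ with path conjunction $\ind_\ell$ and leaf circuit $g_\ell$, the product $\ind_\ell\cdot g_\ell$ still fits in $\AC_{d-1}^0[O(tM)]$ with bottom fan-in $w'$: merge $\ind_\ell$ directly into $g_\ell$'s top AND gate, or distribute it over the AND-headed children of $g_\ell$'s top OR. The inductive hypothesis $\eps_2$-fools each such formula; summing over the $2^t$ leaves the hybrid is $(2^t\eps_2)=\eps/(2w)$-fooled, and combined with the multi-switching failure each swap contributes at most $\eps/w$, telescoping to total error $\eps$.

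For the seed length, $\mathbf{Y}$ uses $O(\log m\cdot\log(m/\delta)\cdot\log\log m)=O(\log^2 m\cdot(\log(m/\eps)+k)\log\log m)$ bits; $\mathbf{H}$ uses $O((t+k)\log n)$, subsumed; the $w=O(k)$ generators $\mathbf{X}^{(i)}$ use $O(k)\cdot O((\log^2 m+\log m\cdot\log^{d-3}m)\log(m/\eps)\log\log m)$ by the inductive hypothesis (or $O(k)\cdot O(\log m\log(m/\eps)\log\log m)$ when the recursion bottoms out at $d-1=2$, via the sharper Theorem~\ref{theo:prg-cnf} bound rather than the looser inductive formula). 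Using $\log(m/\eps)\ge\log m$, the $\mathbf{Y}$ contribution is absorbed, and the total fits within $O((\log^2 m+k\log^{d-2}m)\log(m/\eps)\log\log m)$ as claimed. The main technical obstacle is verifying that after conditioning on the fiber $\mathbf{H}_i$ alone---not on all of $\mathbf{H}$---the distribution of $\mathbf{Z}_i$ is still a $\delta$-PRG for size-$m^2$ CNFs, so that $(\mathbf{H}_i,\mathbf{Z}_i)$ satisfies the joint hypothesis of Lemma~\ref{lemma:multi-switching} with its allowed correlation between $\mathbf{\Lambda}$ and $\mathbf{x}$; this is exactly the ``mixture-of-PRGs'' step. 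Once nailed down, the rest is a routine hybrid argument together with careful parameter bookkeeping.
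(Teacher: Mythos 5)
Your proof takes the same route as the paper's: the same partitioning generator $\mathbf{Y}\oplus\bigoplus_{j}(\mathbf{X}^{(j)}\land \mathbf{H}_j)$, the same $w$-step hybrid argument, the same invocation of Lemma~\ref{lemma:multi-switching} with $(\mathbf{\Lambda},\mathbf{x})=(\mathbf{H}_i,\mathbf{Z}_i)$ applied to the family of depth-$2$ sub-circuits, and the same inductive seed-length bookkeeping (including using Theorem~\ref{theo:prg-cnf} directly at $d-1=2$ rather than the inductive formula), differing only in constant choices (the paper uses $w=40k$, partial-DT width $\log m$, $t=80\log(m/\eps)$, and a $2t$-wise rather than $(t+k)$-wise hash). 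Your explicit spelling-out of the mixture-of-PRGs step when conditioning on $\mathbf{H}_i$ alone, and of how to merge $\ind_\ell$ into the leaf circuits while keeping depth $d-1$ and bottom fan-in $w'$, is a slightly more careful rendering of steps the paper treats implicitly.
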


Theorem~\ref{theo:prg-ac0-fanin} shows that, when the bottom fan-in of the $\AC^0_d$ circuits is smaller than $o(\log(m))$, we can hope for shorter seed length through our construction. Note that we can always interpret a depth-$d$ $\AC^0$ circuit with unbounded bottom fan-in as a depth-$(d+1)$ $\AC^0$ circuit with bottom fan-in being $1$. Then, Theorem~\ref{theo:prg-ac0} follows from Theorem~\ref{theo:prg-ac0-fanin} easily.

\begin{proof}
We use induction on the depth $d$. The case for $d = 2$ follows from Theorem~\ref{theo:prg-cnf}. Assuming this is true for depth $d-1 \ge 2$, we prove it for the case of $d$. Let $w = 40k$ and $t = 80\log(m/\eps)$. We prepare the following pseudorandom primitives.

\begin{itemize}
    \item First, let $\mathbf{H}:[n]\to [w]$ be a $2t$-wise independent hash function, samplable using $O(\log(n)\log(m/\varepsilon))$ bits.
    In the following, we will use $\mathbf{H}_i$ to denote $\mathbf{H}^{-1}(i)$. We remark that $\mathbf{H}_i$ can be equivalently expressed as an $n$-bit string. Namely, $(\mathbf{H}_i)_j= 1$ if and only if $j\in \mathbf{H}_i$.
    \item Second, let $\varepsilon' = \varepsilon / (w\cdot 2^{t+1})$. Sample $\mathbf{X}_1,\dots, \mathbf{X}_w\in \{0,1\}^n$, each being an independent string that $\varepsilon'$-fools $\AC^0_{d-1}$-circuits of size $2m^2$ and bottom-width $\log m$. When $d-1\ge 3$, $\mathbf{X}_i$ is samplable using $O(\log^{d-2}(m)\cdot \log(m/\varepsilon)\cdot \log\log(m))$ bits by the induction hypothesis. For the case $d-1=2$, $\mathbf{X}_i$ is samplable using $O(\log(m)\log(m/\eps)\log\log(m))$ bits by Theorem~\ref{theo:prg-cnf}.
    \item Lastly, let $\mathbf{Y} \in \{0,1\}^n$ be a random string that $\varepsilon/(24m)^{2t}$-fools size-$m$ CNF, samplable using $O(\log^2 m \cdot \log(m/\varepsilon)\cdot \log\log m)$ bits (by Theorem~\ref{theo:prg-cnf}).
\end{itemize}

The seed for our generator is the concatenation of the seeds used to sample all the primitives above. We compute the output of our generator as
\begin{align}\label{eq:prg-def}
\mathbf{Y} \oplus (\mathbf{X}_1\land \mathbf{H}_1) \oplus (\mathbf{X}_2 \land \mathbf{H}_2) \oplus \dots \oplus (\mathbf{X}_w \land \mathbf{H}_w).
\end{align}
Here, $\land$ and $\oplus$ denote bit-wise $\AND$ and $\XOR$ respectively.
The seed length is bounded by
\[
\begin{aligned}
& O\left( \log(n)\log(m/\eps) + w\cdot \log^{d-2}(m) \log(m\cdot 2^t/\eps) \log\log(m) + \log^2 m \log(m/\eps) \log\log m \right) \\
&\le\;\;  O((\log^2(m)+k\cdot \log^{d-2}(m))\cdot \log(m/\eps)\cdot \log\log(m)).
\end{aligned}
\]



We argue the correctness by a hybrid argument. Fix $C$ to be an $\AC^0_d$-circuit that we wish to fool. Let $\mathbf{U}_1,\dots,\mathbf{U}_w$ denote $w$ independent uniformly random strings from $\{0,1\}^n$. For every $i\in \{0,1,\dots, w\}$, we define the $i$-th hybrid distribution as
\begin{align}\label{eq:i-th-hybrid}
    \mathcal{D}_i := \mathbf{Y} \oplus (\mathbf{U}_1 \land \mathbf{H}_1) \oplus \dots \oplus (\mathbf{U}_i \land \mathbf{H}_i) \oplus (\mathbf{X}_{i+1} \land \mathbf{H}_{i+1}) \oplus \dots \oplus (\mathbf{X}_w \land \mathbf{H}_w).
\end{align}
We observe that $\mathcal{D}_0$ is the output distribution of our PRG, while $\mathcal{D}_w$ is a uniformly random string from $\{0,1\}^n$. Hence, it suffices to show that
\begin{align}\label{eq:hybrid-goal}
    |\Ex_{x\sim \calD_0} [C(x)] - \Ex_{x\sim \calD_w}[C(x)]| \le \eps.
\end{align}
To show \eqref{eq:hybrid-goal}, it suffices to show for every $i\in \{1,\dots, w\}$ that
\begin{align}\label{eq:hybrid-bound}
|\mathbb{E}_{x\sim \calD_{i-1}}[C(x)] - \mathbb{E}_{x\sim \calD_i}[C(x)] | \le \varepsilon/w.
\end{align}

In the following, we prove \eqref{eq:hybrid-bound}. We observe that $\mathbf{H}_i$ is $2t$-wise 
$\frac{1}{w}$-bounded. Conditioning on an instantiation of $\mathbf{H}$, we have that $\mathbf{Z}_i := \mathbf{Y} \oplus \sum_{j < i} (\mathbf{U}_j \land \mathbf{H}_j) \oplus \sum_{j > i} (\mathbf{X}_j \land \mathbf{H}_j)$ is an $\eps/(24m)^{2t}$-pseudorandom string for size-$(m^2)$ CNF, because $\mathbf{Y}$ is. Let $\mathcal{F}$ be the family of all next-to-bottom layer sub-circuits of $C$. Denote by $\calE$ the event
\[
\text{``$\mathcal{F}|_{ \mathbf{H}_i[\mathbf{Z}_i,\star]}$} \text{ does not have $\log(m)$-partial depth-$t$ DT.''}
\]
Then it follows from Lemma~\ref{lemma:multi-switching} that
\[
\Pr_{\mathbf{H},\mathbf{Y},\mathbf{U}_1,\dots, \mathbf{U}_{i-1}, \mathbf{X}_{i+1},\dots, \mathbf{X}_w}[\calE] \le 4m^{t/\log(m)}\left( 24 \frac{k}{w} \right)^{t} + \frac{\eps\cdot (24m)^{t+\log(m)}}{(24m)^{2t}} \le \frac{\eps}{2w}.
\]
Conditioning on $\lnot \calE, \mathbf{H},\mathbf{Y},\mathbf{U}_1,\dots, \mathbf{U}_{i-1}, \mathbf{X}_{i+1},\dots, \mathbf{X}_w$ and calculating $\mathbf{Z}_i$ as defined above, one can then write $C|_{{\bf H}_i[{\bf Z}_i,\star]}$ as a depth-$t$ decision tree $T$ where each leaf of $T$ is labelled by an $\AC^0_{d-1}$-circuit of size $m^2$ and bottom fan-in $\log(m)$. Let $\{\ell_1,\dots, \ell_{2^t}\}$ enumerate the leaves of the decision tree. Each $\ell_j$ is associated with a size-$m^2$ depth-$(d-1)$ circuit, which is also denoted by $\ell_j:\{0,1\}^n\to \{0,1\}$ for brevity. Then one can write $C|_{( {H}_i)[{Z}_i,\star]}$ as
\[
C|_{\mathbf{H}_i[{Z}_i,\star]}(y) = \sum_{j=1}^{2^t} \ell_j(y) \cdot \mathbbm{1}\{\text{$T(y)$ reaches leaf $\ell_j$}\}.
\]
Let's fix an index $j\in [2^t]$ for now. Note that $\ell_j(y) \cdot \mathbbm{1}\{\text{reach $\ell_j$ on $y$}\}$ is itself a depth-$(d-1)$ circuit of size at most $2m^2$. By the construction of ${X}_i$ we know that
\[
\begin{aligned}
& \Big|\Ex_{\mathbf{X}_i}[\ell_j(\mathbf{X}_i+Z_i) \cdot \mathbbm{1}\{\text{$T$ reaches $\ell_j$ on $\mathbf{X}_i+Z_i$}\}] - \\ 
& ~~~~\Ex_{\mathbf{U}_i}[\ell_j(\mathbf{U}_i+Z_i) \cdot \mathbbm{1}\{\text{$T$ reaches $\ell_j$ on $\mathbf{U}_i+Z_i$}\}]\Big| \le \frac{\eps}{2^{t+1}w}.
\end{aligned}
\]
Taking a summation over all leaves $j$, one gets
\[
\left|\Ex_{\mathbf{X}_i}[C|_{\mathbf{H}_i[{Z}_i,\star]}(\mathbf{X}_i+Z_i)] - \Ex_{\mathbf{U}_i}[C|_{\mathbf{H}_i[{Z}_i,\star]}(\mathbf{U}_i+Z_i)]]\right| \le \frac{\eps}{2w}.
\]
Finally, one has
\[
\begin{aligned}
|\Ex_{x\sim \calD_{i-1}} [C(x)] - \Ex_{x\sim \calD_i}[C(x)]| 
&\le \Pr[\lnot \calE] \cdot \frac{\eps}{2w} + \Pr[\calE] \le \frac{\eps}{w},
\end{aligned}
\]
proving \eqref{eq:hybrid-bound}.
\end{proof}

Given Theorem~\ref{theo:prg-ac0-fanin}, we prove Theorem~\ref{theo:prg-ac0} by tuning parameters.

\begin{proofof}{Theorem~\ref{theo:prg-ac0}}
For every $d\ge 3$ and every $\AC_d^0[m]$ circuit with unbounded bottom fan-in, we can interpret it as a depth-$(d+1)$ circuit with bottom fan-in being $1$. Applying Theorem~\ref{theo:prg-ac0-fanin} in this case gives a PRG with seed length $O(\log^{d-1}(m)\log(m/\eps)\log\log(m))$, as desired. For the case of $d=2$, we use Theorem~\ref{theo:prg-cnf} directly. This completes the proof.
\end{proofof}

As a final remark, suppose we could have $O(\log(m)\log(m/\eps))$-seed PRG for CNFs (namely, if we can shave the $\log\log m$ factor in Theorem~\ref{theo:prg-cnf}). Then our construction implies PRG for $\AC_3^0$ with seed length $O(\log^2(m) \log(m/\eps))$, and further implies PRG for $\AC_d^0$ with seed length $O(\log^{d-1}(m)\log(m/\eps))$, matching the lower bound barrier \cite{DBLP:journals/acr/Hastad89, DBLP:journals/siamcomp/Hastad14a}.



\section{Fully-Derandomized Switching Lemma}\label{sec:single-switching}

Before we show the proof of Lemma~\ref{lemma:multi-switching}, we state and prove the simpler version of the classical switching lemma in this section. The following statement is the full version of Lemma~\ref{lemma:intro-single} in Introduction. Lemma~\ref{lemma:intro-single} follows from Lemma~\ref{lemma:single-switching} trivially.

\begin{lemma}\label{lemma:single-switching}
Let $k,m\ge 1$ be integers, and $\eps, p > 0$ be reals. Let $F = \bigvee_{i=1}^m C_i$ be a size-$m$ $k$-DNF over inputs $\bits^n$. For all $t\ge 1$, let $(\mathbf{\Lambda}, \mathbf{x})$ be a joint random variable such that:
\begin{itemize}
    \item $\mathbf{\Lambda}$ is a $(t+k)$-wise $p$-bounded subset of $[n]$.
    \item Conditioning on $\mathbf{\Lambda}$, $\mathbf{x}$ is a random string that $\eps$-fools CNF of size at most $m$.
\end{itemize}
Consider the random restriction $F|_{\mathbf{\Lambda}[\mathbf{x},\star]}$, we have:
\[
\Pr_{\mathbf{\Lambda},\mathbf{x}}[\DT(F|_{\mathbf{\Lambda}[\mathbf{x}, \star]}) > t] \le \left( 10kp \right)^t + (4m)^{t+k}\cdot \eps.
\]
\end{lemma}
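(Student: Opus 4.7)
The plan is to follow Razborov's canonical encoding for the switching lemma, combined with the Trevisan--Xue/Kelley idea of casting the ``bad restriction'' event in a form that can absorb a CNF-fooling distribution. First I would define a canonical query algorithm $\mathcal{A}$ that on input a partial assignment $\rho$ outputs the canonical decision tree for $F|_\rho$: it scans the terms $C_1,\dots,C_m$ in index order, locates the first $C_i$ not falsified by $\rho$, queries all of its unfixed variables in increasing-index order, outputs $1$ if they satisfy $C_i$, and otherwise recurses on the extended restriction. Observe that $\DT(F|_\rho) > t$ iff some branch of $\mathcal{A}$ performs strictly more than $t$ queries.

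Second, I would associate to each ``bad'' branch a canonical pattern $\pi$ recording, for each visited term $C_{i_j}$ (indices $i_1 < \dots < i_r$): (a)~the subset $S_j \subseteq [k]$ of positions of $C_{i_j}$ whose variables are queried (equivalently, lie in $\mathbf{\Lambda}$), and (b)~the bits on $S_j$ (which must not satisfy $C_{i_j}$ for $j < r$). The bits on the non-$S_j$ positions of $C_{i_j}$ are forced by the literal signs, since entering $C_{i_j}$ requires each of its fixed literals to evaluate to $1$. A careful Razborov-style encoding -- exploiting the monotonicity $i_1 < \dots < i_r$ to avoid paying $\log m$ per visited term -- bounds the number of patterns with total query count in $[t, t+k]$ by at most $(4m)^{t+k}$.

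Third, for each pattern $\pi$ I would decompose the event ``$\pi$ is realized by $(\mathbf{\Lambda},\mathbf{x})$'' as the conjunction of: (i)~a $\mathbf{\Lambda}$-only event saying that the $s_\pi := \sum_j |S_j|$ queried coordinates all lie in $\mathbf{\Lambda}$, which has probability at most $p^{s_\pi}$ by $(t+k)$-wise $p$-boundedness; and (ii)~an $\mathbf{x}$-only event (conditional on $\mathbf{\Lambda}$) prescribing specific values of $\mathbf{x}$ on the non-queried positions of the visited terms. Event~(ii) is a single AND gate of width at most $k(t+k) \le m$, hence a CNF of size at most $m$; since $\mathbf{x} \mid \mathbf{\Lambda}$ $\eps$-fools size-$m$ CNFs, its probability is at most $2^{-r_\pi} + \eps$, where $r_\pi$ counts the non-queried positions across the visited terms. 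Multiplying and summing over patterns gives
\[
\Pr\bigl[\DT(F|_{\mathbf{\Lambda}[\mathbf{x},\star]}) > t\bigr] \le \sum_\pi p^{s_\pi}\cdot 2^{-r_\pi} + \sum_\pi p^{s_\pi}\cdot \eps,
\]
where the first sum is the standard Razborov/H{\aa}stad estimate $(10kp)^t$, and the second is at most $(4m)^{t+k} \cdot \eps$.

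I expect the main obstacle to be Step~2: attaining the sharp constant $10$ in $(10kp)^t$ forces one to use Razborov's compact encoding (only $O(1)$ bits per queried variable, not $\log m$) and to handle the last visited term, where some query outcomes may satisfy rather than falsify $C_{i_r}$, without losing constants. A secondary subtlety is verifying that each per-pattern $\mathbf{x}$-event really is a single small AND, so that the CNF-fooling hypothesis can be applied pattern-by-pattern; one cannot feed the overall tester -- an OR of up to $(4m)^{t+k}$ clauses -- directly into the size-$m$ fooling hypothesis, and the whole derandomization strategy here rests on this pattern-by-pattern decomposition.
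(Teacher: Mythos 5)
Your overall strategy --- enumerate a family of canonical ``bad-branch'' patterns, decompose each pattern's occurrence into a $\mathbf{\Lambda}$-event and an $\mathbf{x}$-event, apply the CNF-fooling hypothesis pattern-by-pattern, and union-bound --- is essentially the paper's strategy. But there is a genuine gap in Step~3, and it is not a matter of constants: the bound $\sum_\pi p^{s_\pi}\cdot 2^{-r_\pi}\le (10kp)^t$ does not hold for the event you define.

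Your event ``$\pi$ is realized'' is the conjunction of (i) the $|S_j|$-positions lie in $\mathbf{\Lambda}$ and (ii) the non-$S_j$ positions of the visited terms carry the forced agreeing values. This omits the condition that every term \emph{skipped} by the canonical decision tree (all $C_j$ with $j<i_1$, all $C_j$ with $i_1<j<i_2$, etc.) is \emph{falsified} by $\mathbf{\Lambda}[\mathbf{x},\star]$ composed with the accumulated query answers. Dropping that condition causes the union bound over patterns to blow up by roughly a factor $\binom{m}{r}$, since many choices of visited indices can simultaneously satisfy (i) and (ii). Concretely, take $F=x_1\vee\cdots\vee x_m$ (so $k=1$), $t=1$: there are $\binom{m}{2}$ patterns with $r=2$, each has $s_\pi=2$, $r_\pi=0$, so your first sum is $\ge \binom{m}{2}p^2\approx m^2p^2$, far exceeding $10p$ whenever $mp\gg 1$. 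Your remark about ``exploiting the monotonicity $i_1<\dots<i_r$ to avoid paying $\log m$ per visited term'' names the right concern, but it is not addressed by the $2^{-r_\pi}$ factor, which compresses nothing about the indices.

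The paper closes exactly this gap in two ways, both of which your plan is missing. First (Algorithm~\ref{algo:witness-searcher} and Lemma~\ref{lemma:find-witness}): the indices $\ell_1<\cdots<\ell_r$ are \emph{not} enumerated when bounding the probability of a partial witness. Instead a ``witness searcher'' re-derives them from the string $\rho\circ y$ for a fresh random advice $y\in\bits^n$, succeeding with probability exactly $2^{-s}$; since the searcher scans terms in order and skips exactly those falsified by the current partial assignment, the ``skipped terms falsified'' condition is enforced for free. The decoupling trick (observing $\mathbf{\Lambda}[x,y]\equiv\mathcal{U}_n$ for uniform $x,y$) then yields a per-partial-witness bound of $(2p)^s$, and summing over $(4k)^s$ partial witnesses gives $(10kp)^t$. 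Second, when $\mathbf{x}$ is only pseudorandom, the tester $h^{(r,\ell_i,s_i,B_i,\alpha_i)}_\Lambda(x)$ must check, for a \emph{fixed} tuple including $\ell_i$, that each skipped $C_j$ is falsified (one small $\OR$ per skipped term) and each visited $C_{\ell_j}$ is not falsified (one small $\AND$). This is an honest CNF of size $\le m$, not a single $\AND$ gate; your (ii), which checks only the agreeing positions, certifies a much weaker and hence far more probable event. The $(4m)^{t+k}\eps$ term then comes from enumerating all full tuples $(r,\ell_i,s_i,B_i,\alpha_i)$, and the $(10kp)^t$ term from summing the uniform-$x$ probabilities of these \emph{full} witness events, which are already controlled by the searcher argument.
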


Understanding the proof of Lemma~\ref{lemma:single-switching} is necessary to read the proof of Lemma~\ref{lemma:multi-switching}. On the other hand, once Lemma~\ref{lemma:single-switching} is established, we can prove Lemma~\ref{lemma:multi-switching} using a rather similar strategy. The rest of the section is devoted to the proof of Lemma~\ref{lemma:single-switching}.

We will first introduce the important concept of ``canonical decision tree'' (\cite{DBLP:journals/acr/Hastad89, DBLP:conf/coco/TrevisanX13, DBLP:conf/approx/ServedioT19, DBLP:conf/coco/Tal17}) in Section~\ref{sec:canonical-dt}. We prove for the case that $\mathbf{x}$ is truly random (and only $\mathbf{\Lambda}$ is pseudorandom) in Section~\ref{sec:proof-random-x}, and then argue how to prove for pseudorandom $\mathbf{x}$ in Section~\ref{sec:proof-pseudo-x}.

Throughout the whole section, we always use $F = \bigvee_{i=1}^m C_i$ to denote the $k$-DNF that we are analyzing. For every $i\in [m]$ let $V_i\subseteq [n]$ be the variables involved in the term $C_i$. 

\subsection{Canonical decision tree}\label{sec:canonical-dt}

For a DNF $F=\bigvee_{i=1}^m C_i$, denote by $T_{F}$ the \emph{canonical decision tree} of $F$, whose construction is shown in Algorithm~\ref{algo:canonical-tree}. We use $\CDT(F)$ to denote the depth of the canonical decision tree for $F$. Since canonical decision tree is one particular decision tree of $F$, its depth must be no less than $\DT(F)$. To prove Lemma~\ref{lemma:single-switching}, we will analyze canonical decision trees, and show that with high probability over the random restriction $\mathbf{\rho}$, we have $\CDT(F|_{\mathbf{\rho}}) < t$.

\begin{algorithm2e}[H]
    \caption{Canonical Decision Tree}
    \label{algo:canonical-tree}
    \DontPrintSemicolon
    \KwIn{
        A DNF $F = \bigvee_{i=1}^m C_i$, black-box access to a string $\alpha \in \bits^n$.
    }
    \SetKwProg{Init}{initialize}{:}{}
    \Init{}{
        $j^*\gets 0$. \;
        $x\gets (\star)^n$. \;
    }
    \While{$j^* < m$} {
        Find the first $j > j^*$ such that $C_j(x)\not\equiv 0$. If no such $j$ exists, exit the loop. \;
        $B_j\gets $ the set of unknown variables in $C_j$. \;
        Query $\alpha_{B_j}$. \;
        Set $x_{B_j}\gets \alpha_{B_j}$. \;
        \If{$C_j(x) = 1$}{
            \Return{$1$}. \;
        }
        $j^*\gets j$. \;
    }
    \Return{$0$} \;
\end{algorithm2e}

\subsection{Proof when $x$ is truly random}\label{sec:proof-random-x}

\paragraph*{Defining witness} Let $\rho = \Lambda[x, \star]$ being a bad restriction, under which $\CDT(F|_\rho) \ge t$. Consider simulating the canonical decision tree $T_{F|_{\rho}}$. We know that on some inputs $\alpha\in \bits^n$, $T_{F|_{\rho}}$ fails to output the decision after making $(t-1)$ queries. We choose one such $\alpha$ and simulate $T_{F|_\rho}$ until it makes at least $t$ queries. The ``running transcript'' of $T_{F|_\rho}$ on $\alpha$ is naturally a witness to the fact that $\CDT(F_\rho) \ge t$. We formalize this idea in the following definition.

\begin{definition}\label{def:witness}
Let $F = \bigvee_{i=1}^m C_i$ be the $k$-DNF and $\rho \in \{0,1,\star\}^n$ be a restriction. Let $t\ge 1$. Consider a tuple $(r, \ell_i, s_i, B_i, \alpha_i)$, where:
\begin{enumerate}
    \item $r\in [1, t]$ is an integer.
    \item $(\ell_1,\dots, \ell_r)\in [m]^r$ is a list of increasing indices.
    \item $(s_1,\dots, s_r)$ is a list of positive integers such that $s:=\sum_{i=1}^r s_i \in [t, t + k - 1]$.
    \item $(B_1,\dots, B_r)$ is a list of subsets of $[k]$. Moreover, for every $i\in [r]$, $|B_i| = s_i$.
    \item $(\alpha_1, \dots, \alpha_r)$ is a list of binary strings. For every $i\in [r]$, $|\alpha_i| = s_i$.
\end{enumerate}
We call $(r, \ell_i, s_i, B_i, \alpha_i)$ a $t$-witness for $\rho$, if there exists $\alpha \in \bits^n$ such that:
\begin{itemize}
    \item When we run $T_{F|_\rho}$ on $\alpha$, for every $i\in [r]$, $C_{\ell_i}$ is the $i$-th term queried by $T_{F|_\rho}$.
    \item By the time $T_{F|_\rho}$ issues the $i$-th set of query, exactly $s_i$ variables in $C_{\ell_i}$ are not known, and their ``relative positions'' in $V_{\ell_i}$ are specified by $B_i$.
    \item The response to the $i$-th set of query is $\alpha_i$.
\end{itemize}
We define the \emph{size} of the witness $(r,\ell_i, s_i,B_i, \alpha_i)$ as $s := \sum_{i=1}^r s_i$.
\end{definition}

\paragraph*{Notation convention} When $B_i$ and $\alpha_i$ are associated with a term $C_{\ell_i}$, sometimes we will slightly abuse notation by using $B_i$ to refer to the set of variables it corresponds to in $V_{\ell_i}$ (Recall $V_{\ell_i}$ is the set of variables that appear in $C_{\ell_i}$), and use $\alpha_i$ to denote a partial assignment $\alpha_i\in \{0,1, \star\}^n$ in which we only assign the $B_i$ part of $\alpha_i$ and leave other coordinates unfixed. In this section, we will mostly use the term ``witness'' to denote $t$-witness, since we are always analyzing for a fixed $t$.

It is easy to see that if $\DT(F|_\rho)\ge t$, there must be a witness for $\CDT(F|_{\rho})\ge t$\footnote{However, we note that the converse may not be true, since our witness can only refute the existence of shallow \emph{canonical} decision trees.}. By now, a natural idea to bound the probability of picking a bad restriction would be (1) enumerating every possible witness, (2) calculating the probability of a random $\rho$ having such a witness, and (3) union-bounding over them. Unfortunately this is too expensive for us: we have at least $\binom{m}{t}$ choices of the list $(\ell_i)$. In order for this approach to be meaningful, we have to bound the probability that a random restriction has a particular witness by $m^{-t}$, which seems very hard, if not impossible.

It turns out we can avoid the enumeration of $(\ell_i)$ part in the witness. To succinctly describe the idea, let us define partial witnesses first.

\begin{definition}\label{def:partial-witness}
Let $F = \bigvee_{i=1}^m C_i$ be the $k$-DNF and $\rho \in \{0,1,\star\}^n$ be a restriction. Let $t\ge 1$. Consider a tuple $(r, s_i, B_i, \alpha_i)$. We call $(r, s_i, B_i, \alpha_i)$ a partial $t$-witness for $\rho$, if there exists $(\ell_1,\dots, \ell_r)$ such that $(r,\ell_i, s_i, B_i, \alpha_i)$ is a $t$-witness for $\rho$.
\end{definition}

\begin{remark}\label{remark:unique-completion}
Here we make an important observation: if $(r,s_i, B_i, \alpha_i)$ is a partial witness for $\rho$, then there is only one valid list $(\ell_i)$ which makes $(r, \ell_i, s_i, B_i, \alpha_i)$ a witness for $\rho$ (To see this, first note that $\ell_1$ is fixed. After querying $\ell_1$ and getting the response $\alpha_1$, use induction). This observation will prove useful in Section~\ref{sec:proof-pseudo-x} and Section~\ref{sec:multi-switching}.
\end{remark}

Then, the proof goes by enumerating $r, s_i, B_i, \alpha_i$ and bounding the following:
\begin{align}
\Pr_{\rho}[\DT(F|_{\rho})\ge t] \le \sum_{(r,s_i,B_i,\alpha_i)}\Pr_{\rho}[ \text{$(r,s_i,B_i,\alpha_i)$ is a partial witness for $\rho$}].
\end{align}

Fixing an $s\in [t,t+k-1]$, there are at most $(4k)^s$ possible partial witnesses $(r,s_i,B_i, \alpha_i)$ of size $s$. For each of them, if we can bound the probability by, say, $(2p)^s$, then the lemma is proved.

\paragraph*{The witness searcher} Now, given a restriction $\rho$ and a candidate partial witness $(r,s_i, B_i, \alpha_i)$, how can we decide if there is a list $(\ell_i)_i$ such that $(r,\ell_i, s_i, B_i, \alpha_i)$ constitutes a witness for $\rho$? We will do so by designing and using a simple procedure, which we call the witness searcher. More specifically, our searcher receives as input a restriction $\rho$, a partial witness $(r, s_i, B_i, \alpha_i)$ and an advice string $y\in \{0,1\}^n$. It outputs either a complete witness $(r, \ell_i, s_i, B_i, \alpha_i)$, or a special ERROR symbol. Its description in shown in Algorithm~\ref{algo:witness-searcher}.

\begin{algorithm2e}[H]
    \caption{Witness Searcher}
    \label{algo:witness-searcher}
    \DontPrintSemicolon
    \KwIn{
        A DNF $F = \bigvee_{i=1}^m C_i$, a restriction $\rho \in \{0, 1, \star\}^n$, a partial witness $(r, s_i, B_i, \alpha_i)$, an advice $y\in \bits^n$.
    }
    \SetKwProg{Init}{initialize}{:}{}
    \Init{}{
        $z\gets \rho \circ y$.\;
        $j^* \gets 0$. \;
        $c\gets 1$. \;
    }
    \While{$c\le r$} {
        Find the first $j > j^*$ such that $C_j$ is satisfied by $z$. If no such $i$ exists, return ERROR. \;
        Set $\ell_c \gets j$, and associate $B_c, \alpha_c$ with $C_{\ell_c}$. \;
        Replace the $B_c$ part of $z$ with $\alpha_{c}$. That is, $z\gets \rho \circ \alpha_1 \circ \dots \circ \alpha_c \circ y$. \;
        $c \gets c + 1$. \;
        $j^*\gets j$. \;
    }
    \Return{$(r, \ell_i, s_i, B_i, \alpha_i)$} \;
\end{algorithm2e}

In the following, we use $\mathcal{S}$ to denote Algorithm~\ref{algo:witness-searcher}.

\begin{lemma}\label{lemma:find-witness}
If $(r,s_i,b_i, \alpha_i)$ is a partial witness for $\rho$, then there exists an advice $y$ that makes $\calS$ find $(\ell_i)_{i=1}^r$. More importantly, on a uniformly random $\mathbf{y}\sim \calU_n$, $\calS$ finds $(\ell_1,\dots, \ell_r)$ with probability exactly $2^{-s}$.
\end{lemma}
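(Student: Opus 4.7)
The plan is to track the searcher $\calS$ iteration by iteration and match each step against what the canonical decision tree is guaranteed to do on the witness. By Remark~\ref{remark:unique-completion}, the partial witness $(r, s_i, B_i, \alpha_i)$ has a unique completion $(\ell_1, \ldots, \ell_r)$, and by Definition~\ref{def:witness} there exists an $\alpha \in \bits^n$ on which $T_{F|_\rho}$ queries $C_{\ell_1}, \ldots, C_{\ell_r}$ in this order with responses $\alpha_1, \ldots, \alpha_r$ and unknown-variable sets $B_1, \ldots, B_r$.

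The inductive claim I would maintain is that, at the start of the $c$-th iteration of $\calS$, the state variable equals $z = \rho \circ \alpha_1 \circ \ldots \circ \alpha_{c-1} \circ y$. Under this invariant, $\calS$ correctly picks $\ell_c$ in iteration $c$ iff (a) $C_{j'}(z) = 0$ for every $j' \in (\ell_{c-1}, \ell_c)$ (with $\ell_0 := 0$), and (b) $C_{\ell_c}(z) = 1$. For (a), the definition of the canonical decision tree forces $C_{j'}|_{\rho \circ \alpha_1 \circ \ldots \circ \alpha_{c-1}} \equiv 0$ for each such $j'$, so already the fixed part of $z$ falsifies $C_{j'}$ and (a) holds irrespective of $y$. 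For (b), the restriction $C_{\ell_c}|_{\rho \circ \alpha_1 \circ \ldots \circ \alpha_{c-1}}$ is a non-identically-zero conjunction of literals over precisely the free coordinates $B_c$ (this uses $s_c \ge 1$); such a conjunction has a \emph{unique} satisfying assignment to its free variables, so (b) holds iff $y|_{B_c}$ equals that unique string, which under uniform $\mathbf{y}$ occurs with probability exactly $2^{-|B_c|} = 2^{-s_c}$. Taking $y$ to equal this unique satisfying string on each $B_c$ (and arbitrary elsewhere) also proves the existence part of the lemma.

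Finally, the sets $B_1, \ldots, B_r$ are pairwise disjoint as subsets of $[n]$: once a coordinate is written into $z$ during some earlier iteration it is no longer ``unknown'' at the time $B_c$ is defined and so cannot lie in $B_c$. Hence the constraints in step (b) for different $c$ depend on disjoint blocks of $y$, are independent under uniform $\mathbf{y}$, and multiply to give the claimed probability $\prod_{c=1}^{r} 2^{-s_c} = 2^{-s}$. The main point to get right, and the only place the argument is non-obvious, is reconciling the CDT's ``not-yet-falsified'' test ($C_j|_x \not\equiv 0$ on a partial $x$) with $\calS$'s ``satisfied'' test ($C_j(z) = 1$ on a full $z$); the unique-satisfying-assignment observation for conjunctions of literals is exactly what bridges the two and converts the CDT guarantee into the sharp probability $2^{-s_c}$ at each step.
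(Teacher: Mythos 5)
Your proof is correct and follows essentially the same iteration-by-iteration argument as the paper: maintain the invariant on the searcher's state $z = \rho\circ\alpha_1\circ\cdots\circ\alpha_{c-1}\circ y$, observe that terms between $\ell_{c-1}$ and $\ell_c$ are falsified regardless of $y$, and use the fact that the surviving conjunction over $B_c$ has a unique satisfying assignment to pin down the probability $2^{-s_c}$; the disjointness of the $B_c$'s then gives the product $2^{-s}$. Your phrasing in terms of disjoint constraints on $y$ is just a slightly more explicit rendering of the paper's sequential conditioning, so the two arguments coincide.
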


\begin{proof}
If $(r, s_i, \ell_i, b_i, \alpha_i)$ is a witness for $\rho$, it records a running transcript of $T_{F|_\rho}$ on some input $\alpha\in \{0,1\}^n$. This mean that $C_{\ell_1}$ is the first term queried by $T_{F|_\rho}$, which implies:
\begin{enumerate}
    \item Every term before $C_{\ell_1}$ is falsified by $\rho$ (because $T_{F|_\rho}$ skipped them).
    \item $C_{\ell_1}$ is consistent with $\rho$ (because $T_{F|_\rho}$ was uncertain of the value of $C_{\ell_i}(\alpha)$.
\end{enumerate} 
Now, consider running $\calS$ on $(r,s_i,b_i,\alpha_i)$, $\calS$ will also skip all the terms before $C_{\ell_1}$. At $C_{\ell_1}$, since $\rho$ is consistent with $C_{\ell_1}$, the random string $\rho\circ \mathbf{y}$ satisfies $C_{\ell_1}$ with probability $2^{-|B_1|}$. Conditioning on this happened, in the later execution, we modify $z$ and replace the part corresponding to $B_1$ with $\alpha_1$. 

At this point, we go back and inspect the execution of $T_{F|_\rho}$ on $\alpha$. Since $(r, s_i, \ell_i, b_i, \alpha_i)$ is the running transcript of $T_{F|_\rho}$ on $\alpha$, we know that $\alpha_1$ was the response that $T_{F|_{\rho}}$ received from querying alive variables in $C_{\ell_1}$. Since $T_{F|_\rho}$ issued its second bunch of queries to alive variables in $C_{\ell_2}$, it implies that terms from $C_{\ell_1 + 1}$ to $C_{\ell_2-1}$ are all falsified by $\rho\circ \alpha_1$, and $C_{\ell_2}$ is consistent with $\rho\circ \alpha_1$. Then, it follows that $\rho\circ \alpha_1 \circ \mathbf{y}$ satisfies $C_{\ell_2}$ with probability $2^{-|B_2|}$. Conditioning on this happened, the searcher will proceed with string $\rho\circ \alpha_1\circ \alpha_2\circ y$ and we can again consider the execution of $T_{F|_\rho}$ after querying $C_{\ell_2}$. We do this argument so on and so forth, until we have identified all of $r$ indices $\ell_1,\dots, \ell_r$ and exit the procedure. In summary, we have shown there exists $y$ which makes $\calS$ find the list $(\ell_i)_{i=1}^r$, and the probability of sampling such a $y$ is 
\[
2^{-\sum_{i=1}^c|B_i|} = 2^{-s}
\]
as desired.
\end{proof}

\paragraph*{Decoupling} If we inspect the execution of $\calS$ carefully, we can notice that it only needs to know the string $z = \rho\circ y$ to work. In particular, it \emph{does not} need to know which part of $z$ is fixed in the restriction $\rho$. Therefore, we can revise $\calS$ to get a searcher $\calS'$: the input to $\calS'$ is now a string $z$ and $(r, s_i, B_i, \alpha_i)$. Otherwise it runs identically the same way as $\calS$. We denote the output of $\calS'$ as $\calS'(z,(r,s_i,B_i,\alpha_i))$. By Lemma~\ref{lemma:find-witness}, we have
\[
\begin{aligned}
& ~~~~ \Pr_{\rho\sim \calR}[ \text{$(r,s_i,B_i,\alpha_i)$ is a partial witness for $\rho$} ]\\
&\le 2^s \cdot \Pr_{\rho \sim \calR,y\sim \calU_n}[ \calS'(\rho\circ y, (r,s_i,B_i,\alpha_i))  \text{ is a witness for $\rho$} ].
\end{aligned}
\]

Denote $\rho = \Lambda[x, \star]$. We observe that
\begin{align}
\ind\{\text{ $(r,\ell_i,s_i,B_i,\alpha_i)$ is a witness for $\rho$}\} \le \ind\left\{\left( \bigcup_{j=1}^r B_j \right)\subseteq \Lambda \right\}. \label{eq:test-set}
\end{align}
For a $(t+k)$-wise $p$-bounded set $\mathbf{\Lambda}$, the event on the right hand side holds with probability $p^{\sum_{j} |B_j|}$. Then, we have
\[
\begin{aligned}
& ~~~~~ \Pr_{\rho\sim \mathbf{\Lambda}[\mathbf{x}, 0],y\sim \calU_n}[ \text{ $\calS'(\rho\circ y,(r,s_i,B_i,\alpha_i))$ is a witness for $\rho$} ] \\
& = \Ex_{\mathbf{\Lambda}}\Ex_{x,y\sim \calU_n}\left[  \ind\{\text{$\calS'(\mathbf{\Lambda}[x,y],(r,s_i,B_i,\alpha_i))$ is a witness for $\rho$}\} \right] \\
& =\Ex_{z\sim\calU_n}\left[  \Pr_{\mathbf{\Lambda}}[\text{$\calS'(z, (r,s_i,B_i,\alpha_i))$ is a witness for $\rho$}] \right] \\
& \le p^s,
\end{aligned}
\]
where the third equality is due to that $\Lambda[x,y]$ is distributed as $\calU_n$ when $x,y\sim \calU_n$, and the last inequality holds by \eqref{eq:test-set} and the $(t+k)$-wise $p$-bounded property of $\mathbf{\Lambda}$.

\paragraph*{Wrapping-up} We finish the proof by enumerating all $(r,s_i,B_i,\alpha_i)$ and taking a summation.

\begin{align}
\Pr_{\rho\sim \mathbf{\Lambda}[\mathbf{x},\star]}[\DT(F|_{\rho}) \ge t]
&\le 
\sum_{(r,s_i,B_i,\alpha_i)} \Pr_{\rho \sim \mathbf{\Lambda}[\mathbf{x},\star]}[ \text{$(r,s_i,B_i,\alpha_i)$ is a partial witness for $\rho$} ] \notag
\\
&\le 
\sum_{(r,s_i,B_i,\alpha_i)} 2^{s}\cdot \Ex_{z\in \calU_n}\left[\Pr_{\mathbf{\Lambda}}[\text{$\calS'(z,(r,s_i,B_i,\alpha_i))$ is a witness for ${\mathbf{\Lambda}[\mathbf{x}, \star]}$} ]\right] \notag
\\
&\le 
\sum_{s=t}^{t+k} 2^{s} (4k)^{s} p^s \notag
\\
&\le (10kp)^t. \label{eq:bound-for-uniform-x}
\end{align}

\subsection{Proof when $x$ is pseudorandom}\label{sec:proof-pseudo-x}   

Now we consider the case that $\mathbf{x}$ is not truly random. There could even be correlation between $\mathbf{x}$ and $\mathbf{\Lambda}$. Our only requirement for $\mathbf{x}$ is that, for every fixed $\mathbf{\Lambda}$, $\mathbf{x}$ is a pseudorandom string that $\eps$-fools CNFs of size at most $m$. First of all, by Remark~\ref{remark:unique-completion}, the following equation is established.
\begin{align}
\mathbbm{1}\{\text{$(r,s_i,B_i,\alpha_i)$ is a partial witness for $\rho$}\} = \sum_{\ell_i} \mathbbm{1}\{\text{$(r,\ell_i,s_i,B_i,\alpha_i)$ is a witness for $\rho$}\}. \label{eq:partial-summation}
\end{align}

If we inspect the deduction in \eqref{eq:bound-for-uniform-x}, it actually shows the following:
\begin{align}
\sum_{(r,s_i,B_i,\alpha_i)} \sum_{\ell_i} \Pr_{\rho \sim \mathbf{\Lambda}[\mathcal{U}_n,\star]}[ \text{$(r,\ell_i, s_i,B_i,\alpha_i)$ is a witness for $\rho$} ] \le (10kp)^t. \label{eq:bound-for-uniform-x-technical}
\end{align}
Next, fixing the set $\Lambda$ and a tuple of $(r, \ell_i, s_i,B_i, \alpha_i)$, we consider the predicate
\[
h^{(r,\ell_i, s_i, B_i, \alpha_i)}_{\Lambda}(x) := \mathbbm{1}\{  \text{$(r,\ell_i, s_i,B_i,\alpha_i)$ is a witness for ${\Lambda[x, \star]}$}  \}. 
\]
We omit the superscript and subscript of $h$ when they are clear from context. We claim that $h$ can be implemented by a CNF of size at most $m$. To see this, note that $h(x) = 1$ if all of the following hold.
\begin{itemize}
    \item For every $j < \ell_1$, $C_j$ is falsified by $\Lambda[x,\star]$. For this to be true, there is at least one variable in $C_{j}$ that is not chosen by $\Lambda$ and is assigned to the opposite value to its appearance in $C_j$. Note that we can use an OR over relevant variables or their negations to verify if $C_j$ is falsified.
    \item For $j = \ell_1$, $C_{\ell_1}$ cannot be falsified by $\Lambda[x,\star]$, which means all variables in $C_{\ell_1}$ that are not chosen by $\Lambda$ are assigned so that it does not contradict $C_{\ell_1}$. We can use an AND over those variables to verify this. 
    \item For every $\ell_1 < j < \ell_2$, $C_j$ is falsified by $\Lambda[x,\alpha_1]$, which means that either $C_j$ is falsified by $\alpha_1$, or there is at least one variable in $C_{j}$ that is not chosen by $\Lambda$ and is assigned to the opposite value. If $C_j$ is falsified by $\alpha_1$, there is nothing to verify. Otherwise, we can verify it by an OR over relevant variables.
    \item The same reasoning applies to $C_{j}$ for every $j \ge \ell_2$. Finally, after we finish the verification for $C_{\ell_r}$, we are done.
\end{itemize}
In summary, we can implement $h$ by a CNF (AND of ORs). It is obvious that the size of the CNF is no larger than the size of $F$, which is $m$. Therefore, we conclude that $\mathbf{x}$ $\eps$-fools $h$. That is,
\begin{align}
\left| \Pr_{x\sim \mathcal{U}_n}[h^{(r,\ell_i, s_i, B_i, \alpha_i)}_{\Lambda}(x)] - \Pr_{x\sim \mathbf{x}}[h^{(r,\ell_i, s_i, B_i, \alpha_i)}_{\Lambda}(x)] \right| \le \eps. \label{eq:pseudo-x-fool-cnf}
\end{align}
Finally, we have
\[
\begin{aligned}
\Pr_{\rho\sim \mathbf{\Lambda}[\mathbf{x},\star]}[\DT(F|_{\rho}) \ge t]
&\le 
\sum_{(r,s_i,B_i,\alpha_i)} \Pr_{\rho \sim \mathbf{\Lambda}[\mathbf{x},\star]}[ \text{$(r,s_i,B_i,\alpha_i)$ is a partial witness for $\rho$} ] \\
&\le 
\sum_{(r,s_i,B_i,\alpha_i)} \sum_{\ell_i} \Pr_{\rho \sim \mathbf{\Lambda}[\mathbf{x},\star]}[ \text{$(r,\ell_i, s_i,B_i,\alpha_i)$ is a witness for $\rho$} ] \\
&\le \sum_{(r,s_i,B_i,\alpha_i)} \sum_{\ell_i} \eps + \Pr_{\rho \sim \mathbf{\Lambda}[\mathcal{U}_n,\star]}[ \text{$(r,\ell_i, s_i,B_i,\alpha_i)$ is a witness for $\rho$} ]\\
&\le (10kp)^t + (4m)^{t+k}\cdot \eps.
\end{aligned}
\]
Here, the first line is due to the argument in Section~\ref{sec:proof-random-x}. The second line holds by \eqref{eq:partial-summation}. The third line is due to \eqref{eq:pseudo-x-fool-cnf}. The last line holds because \eqref{eq:bound-for-uniform-x-technical} and there are at most $(4m)^{k+t}$ possible tuples $(r, \ell_i, s_i, B_i, \alpha_i)$.

\section{Derandomizing the Multi-Switching Lemma}\label{sec:multi-switching}

In this section, we prove Lemma~\ref{lemma:multi-switching}. Recall the statement.

\begin{reminder}{Lemma~\ref{lemma:multi-switching}}
Let $t,w,k,n,m\ge 1$ be integers. Let $p,\eps > 0$ be reals. Let $\mathcal{F} = \{F_1,\dots, F_m\}$ be a list of size-$m$ $k$-DNFs on inputs $\{0,1\}^n$. Let $(\mathbf{\Lambda}, \mathbf{x})$ be a joint random variable satisfying the following:
\begin{itemize}
    \item $\mathbf{\Lambda}$ is a $(t+k)$-wise $p$-bounded subset of $[n]$,
    \item Conditioning on $\mathbf{\Lambda}$, $\mathbf{x}$ is a random string that $\eps$-fools size-$(m^2)$ CNF.
\end{itemize}
Then we have
\[
\Pr_{T\sim \mathbf{\Lambda}, x\sim \mathbf{x}}[\mathcal{F}|_{\Lambda[x,\star]} \text{ does not have $w$-partial depth-$t$ DT}] \le 4m^{t/w}(24pk)^{t} + (24m)^{t+k}\cdot \eps.
\]
\end{reminder}
The rest of the section is devoted to the proof of Lemma~\ref{lemma:multi-switching}. Section~\ref{sec:multi-prelim} states some preliminary tools and includes a proof overview. Section~\ref{sec:multi-random-x} considers the case that $\mathbf{x}$ is truly random. It extends the idea in Section~\ref{sec:proof-random-x} and has a rather similar structure. Finally in Section~\ref{sec:multi-pseudo-x}, we consider the case that $\mathbf{x}$ is pseudorandom, and finish the proof.

\subsection{Preliminaries}\label{sec:multi-prelim}


\paragraph*{The canonical partial decision tree} Let $\rho$ be a random restriction under which $\calF|_{_{\rho}}$ fails to have a $w$-partial depth-$t$ decision tree. Let $\beta \in \{0,1\}^n$ be an unknown string that our decision tree shall query. We consider the following attempt (see Algorithm~\ref{algo:canonical-partial}) to construct a partial decision tree for $\mathcal{F}|_\rho$. Here $z \in \bits^n$ is an auxiliary string to be chosen. Every choice of $z\in \bits^n$ yields a different partial decision tree for $\mathcal{F}$. In \cite{DBLP:conf/approx/ServedioT19}, the construction was called ``canonical partial decision tree''.

\begin{algorithm2e}[H]
    \caption{Canonical Partial Decision Tree}
    \label{algo:canonical-partial}
    \DontPrintSemicolon
    \KwIn{
        A list of DNFs $\calF = \{F_1,\dots, F_m\}$, black-box access to a string $\beta \in \{0,1\}^n$, and an auxiliary string $z\in \bits^n$.
    }
    \SetKwProg{Init}{initialize}{:}{}
    \Init{}{
        $x \gets (\star)^n$.\;
        $j \gets 1$. \;
        $\mathrm{counter} \gets 0$. \;
    }
    \While{$\mathrm{counter} < t$} {
        Find the smallest $i \ge j$ such that $\DT(F_i|_{x}) > w$. If no such $i$ exists, exit the loop. \;
        $y\gets (\star)^n$. \;
        $I\gets \emptyset$. \;
        \While{$F_i|_{x\circ y}(\star)$ is not constant and $\mathrm{counter} < t$} {
            $C_{i, q}\gets$ the term that $T_{F_i|_{x\circ y}}$ will query. \;
            $B_{i, q}\gets $ the set of unknown variables in $C_{i, q}|_{x\circ y}$. \;
            $y_{B_{i, q}} \gets z_{B_{i, q}}$. \;
            $I\gets I \cup B_{i, q}$. \;
            $\mathrm{counter} \gets \mathrm{counter} + |B_{i, q}|$. \;
        }
        Query $\beta_{I}$, and set $x_I\gets \beta_I$. \;
        $j\gets i$. \;
    }
    \Return{$x$} \;
\end{algorithm2e}

Note that in Algorithm~\ref{algo:canonical-partial}, it is possible for a formula $F_i$ to get picked by the outer ``while'' loop more than once.

\paragraph*{Proof overview} Before we dive into the formal proof, we try to give some (over-simplified) intuition here. If $\calF$ does not have $w$-partial DT of depth $t$, then Algorithm~\ref{algo:canonical-partial} fails to construct a partial decision tree of depth $t$ on every auxiliary string $z$. However, we will only consider some ``adversarially chosen'' $z$'s. That is, we only consider those $z$'s that trick Algorithm~\ref{algo:canonical-partial} to make at least $w$ queries on each chosen formula. By doing so, we are guaranteed that Algorithm~\ref{algo:canonical-partial} only chooses $\frac{t}{w}$ formulae in total.

Then, we can pay a factor of $m^{\frac{t}{w}}$ to enumerate a subset of $\frac{t}{w}$ formulae in $\mathcal{F}$, and calculate the probability that Algorithm~\ref{algo:canonical-partial} ``gets stuck'' on those formulae. Having fixed those $\frac{t}{w}$ formulae, we can use ideas similar to Section~\ref{sec:proof-random-x}, and bound the probability by $O(p k)^t$. In summary, we get
\[
\Pr[\text{bad restriction}] \le m^{\frac{t}{w}} \cdot O(pk)^t.
\]
Here we only considered the case that the restriction string $\mathbf{x}$ is purely random. For the case that both $\mathbf{\Lambda}$ and $\mathbf{x}$ are pseudorandom, we use tricks similar to Section~\ref{sec:proof-pseudo-x} and pay another additive factor.

\subsection{Proof when $\mathbf{x}$ is truly random}\label{sec:multi-random-x}

We start the proof by considering the case that $\mathbf{x}$ is a truly random string. Fix $\rho$ to be a bad restriction, under which $\mathcal{F}|_\rho$ fails to have a $w$-partial depth-$t$ decision tree. It implies that on every $z\in \bits^n$, Algorithm~\ref{algo:canonical-partial} fails to construct a partial decision tree for $\calF$. However, for ease of our analysis, we will only consider a special class of string $z$. We give the following definition.

\begin{definition}\label{def:powerful-refutation}
Let $\rho$ be a bad restriction for $\calF$. We call $z\in \bits^n$ a \emph{powerful} $(w,t)$-refutation for $\rho$, if there exists $\beta\in \bits^n$ satisfying the following: Algorithm~\ref{algo:canonical-partial} on input $(\mathcal{F}|_{\rho},\beta, z)$ makes at least $t$ queries. Moreover, at each time we query $\beta_I$, it is guaranteed that $|I| \ge w$.
\end{definition}

Powerful refutations always exist for bad restrictions, as shown in the following.

\begin{lemma}\label{lemma:exists-powerful-refutation}
If $\calF|_{\rho}$ does not have $w$-partial decision tree of depth $t$, then there exists a powerful $(w,t)$-refutation for $\rho$.
\end{lemma}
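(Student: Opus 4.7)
The plan is to prove the lemma by induction on $t$, via a slightly strengthened inductive statement: for every DNF family $\calG = \{G_1, \ldots, G_m\}$ that does not admit a $w$-partial depth-$t$ decision tree, one can choose $z$ and $\beta$ so that running Algorithm~\ref{algo:canonical-partial} on $(\calG, \beta, z)$ makes at least $t$ queries to $\beta$, with $|I| \ge w$ at every ``Query $\beta_I$'' step. Taking $\calG = \calF|_\rho$ recovers the lemma. The base case $t = 0$ is vacuous, since ``no $w$-partial DT of depth $\le 0$'' only says that some $G_i$ has $\DT(G_i) > w$, and no queries need to be made.

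For the inductive step, the assumption forces some $G_i$ to have $\DT(G_i) > w$; let $i$ be the smallest such index (which is exactly what Algorithm~\ref{algo:canonical-partial} will pick). Because $\DT(G_i) > w$, the canonical tree $T_{G_i}$ has a root-to-leaf path of length at least $w+1$, and I can fix the bits of $z$ on the coordinates along this path so that the algorithm's inner loop traces exactly this branch. This guarantees that the query set $I$ produced by the first outer iteration satisfies $|I| \ge w+1$. If $|I| \ge t$, that single iteration already witnesses the claim. Otherwise, I assert that there exists $\beta_I \in \bits^{|I|}$ such that $\calG|_{\beta_I}$ has no $w$-partial DT of depth $\le t - |I|$. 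Indeed, were this false, one could glue a $w$-partial DT for each $\calG|_{\beta_I}$ beneath a root that simply branches on $I$, producing a $w$-partial DT for $\calG$ of depth at most $|I| + (t - |I|) = t$, contradicting the hypothesis. Fixing such a $\beta_I$ and invoking the inductive hypothesis on $\calG|_{\beta_I}$ with parameter $t - |I|$ supplies $z', \beta'$; combining them with the first-iteration choices yields the full $z$ and $\beta$.

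The subtle point I expect to spend the most care on is verifying that continuing Algorithm~\ref{algo:canonical-partial} on $\calG$ after the first outer iteration produces the same queries as running it fresh on $\calG|_{\beta_I}$, so that the inductive-hypothesis run really is realized by the combined $(z,\beta)$. Since $i$ was minimal with $\DT(G_i) > w$, every $G_{i'}$ with $i' < i$ satisfies $\DT(G_{i'}) \le w$, and this inequality is preserved under any further restriction; hence the fresh run on $\calG|_{\beta_I}$ will skip indices $1, \ldots, i-1$ in its outer-loop search and effectively resume from index $i$, exactly tracking the state variable $j$ in the continuation on $\calG$. With this alignment in hand, the two executions agree on the sequence of query sets they issue, each of size at least $w$, and the induction closes.
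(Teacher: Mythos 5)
Your proof is correct and follows essentially the same strategy as the paper's: choose $z$ along a depth-$(w+1)$ branch of the canonical decision tree for the selected $F_i$ so that $|I|\ge w$, and use the averaging/gluing argument to pick $\beta_I$ so that the residual family still lacks a $w$-partial decision tree of the reduced depth. The only presentational difference is that the paper builds $z$ and $\beta$ adaptively during a single run of Algorithm~\ref{algo:canonical-partial} (maintaining a loop invariant over \texttt{counter}), whereas you restart the algorithm on $\calG|_{\beta_I}$ and therefore supply the extra alignment argument, which is correct but becomes unnecessary under the paper's single-run phrasing.
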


\begin{proof}
If $\calF_{\rho}$ does not have $w$-partial decision tree of depth $t$, then it means Algorithm~\ref{algo:canonical-partial} fails to construct a $w$-partial DT of depth $t$ for every $z\in \bits^n$. Then we construct a powerful refutation $z$, together with its associated adversarial input $\beta$, in the following way.
\begin{itemize}
    \item Initially, we set $z = \beta = (\star)^n$. Then we monitor the execution of Algorithm~\ref{algo:canonical-partial} on $(\calF, \beta, z)$, and gradually fill in $z$, $\beta$ when they are accessed by Algorithm~\ref{algo:canonical-partial}.
    \item Whenever Algorithm~\ref{algo:canonical-partial} chooses one formula $F_i|_{\rho\circ x}$ in the outer while-loop, we do the following:
    \begin{itemize}
        \item When running inside the inner while-loop, Algorithm~\ref{algo:canonical-partial} needs to consult the auxiliary string $z$. We can fill in relevant variables of $z$ in such a way that Algorithm~\ref{algo:canonical-partial} consults at least $w$ bits from $z$ in the loop. Since $\DT(F_i|_{\rho\circ x}) \ge w$, this is always possible.
        \item After finishing the inner loop, Algorithm~\ref{algo:canonical-partial} will make a query to $\beta_I$. At this point, we set $\beta_I$ in such a way that $\mathcal{F}|_{\rho \circ x\circ \beta}$ does not have $w$-partial DT of depth $(t-\mathrm{counter})$. This is always possible as long as $\mathcal{F}_{\rho \circ x}$ does not have $w$-partial DT of depth $(t-\mathrm{counter}+|I|)$, which holds by induction.
    \end{itemize}
    \item After Algorithm~\ref{algo:canonical-partial} returns, we fill in the remaining bits of $\beta, z$ with zeros. \qedhere
\end{itemize}
\end{proof}

Having established Lemma~\ref{lemma:exists-powerful-refutation}, we come up with the following definition of ``global witness'' naturally.

\begin{definition}
Let $t,w$ be two integers. Consider a list of $k$-DNFs $\calF = \{F_1, \dots, F_m \}$. Suppose $\rho \in \{0,1,\star\}^n$ is a restriction. Let $(R, L_i, S_i, W_i, \beta_i)$ be a tuple, where
\begin{enumerate}
    \item $1\le R \le \frac{t}{w}$ is an integer;
    \item $1\le L_1 \le L_2 \le \dots \le L_R\le m$ is a list of $R$ non-decreasing indices;
    \item $S_1.\dots, S_R$ is a list of $R$ integers such that $\sum_{i=1}^R S_i \in [t, t+k]$;
    \item $W_1,\dots, W_R$ is a list of witnesses (as per Definition~\ref{def:witness}). For every $i\in [R]$, $W_i$ has size $S_i$;
    \item $\beta_1,\dots, \beta_R$ are $R$ strings where $|\beta_i| = S_i$ for every $i\in [R]$.
\end{enumerate}
We call the tuple a $(w,t)$-global witness for $\rho$, if it satisfies the following.
\begin{enumerate}
    \item Set $\rho_1 = \rho$. $W_1$ is a $S_1$-witness for $F_{L_1}|_{\rho_1}$.
    \item For every $i\ge 2$, let $I_{i-1}\subseteq [n]$ be the set of variables involved in $W_{i-1}$. Note that $|I_{i-1}| = S_{i-1}$ since the size of $W_{i-1}$ is $S_{i-1}$. Identify $\beta_{i-1}$ as a partial assignment in $\{0,1,\star\}^n$ where only the part $\beta_{i-1, I_{i-1}}$ is set and other coordinates are filled in with $\star$. Construct $\rho_i = \rho_{i-1}\circ \beta_{i-1}$. Then $W_i$ is a $S_i$-witness for $F_{L_i}|_{\rho_{i}}$.
\end{enumerate}
The size of the global witness is defined as $\sum_{i=1}^R S_i$.
\end{definition}

As a corollary of Lemma~\ref{lemma:exists-powerful-refutation}, we have:
\begin{corollary}
Consider a list of $k$-DNFs $\calF = \{F_1, \dots, F_m \}$. Suppose $\rho \in \{0,1,\star\}^n$ is a restriction such that $\calF|_{\rho}$ does not have $w$-partial depth-$t$ decision tree. Then there exists a $(w,t)$-global witness for $\rho$.
\end{corollary}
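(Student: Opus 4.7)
The corollary follows almost immediately by combining Lemma~\ref{lemma:exists-powerful-refutation} with the definition of a $(w,t)$-global witness, so the plan is to simulate Algorithm~\ref{algo:canonical-partial} on the refutation produced by the lemma and read off the required data from the execution trace.

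Concretely, I would first invoke Lemma~\ref{lemma:exists-powerful-refutation} on $\rho$ to obtain a powerful $(w,t)$-refutation $z$ together with an adversarial input $\beta \in \bits^n$ such that running Algorithm~\ref{algo:canonical-partial} on $(\calF|_{\rho}, \beta, z)$ makes at least $t$ queries and each batch $\beta_I$ queried by the outer loop has $|I|\ge w$. I would then inspect the execution of this run and, each time the outer \textbf{while}-loop picks a new formula $F_{L_i}$ (at line ``Find the smallest $i\ge j$ with $\DT(F_i|_x)>w$''), record $L_i$ and accumulate the data generated by the subsequent inner loop. The inner loop simulates the canonical decision tree of $F_{L_i}|_{\rho_i}$, where $\rho_i := \rho\circ \beta_1\circ \dots \circ \beta_{i-1}$, so the sequence of terms and partial responses it produces is exactly a witness $W_i$ in the sense of Definition~\ref{def:witness} for $F_{L_i}|_{\rho_i}$. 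I take $S_i$ to be the size of that witness (equivalently, the increment to $\mathrm{counter}$ during this outer iteration) and $\beta_i$ to be the $I$ bits of $\beta$ that get queried at the end of the iteration.

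It remains to verify the five defining conditions of a $(w,t)$-global witness. The indices $L_1\le L_2 \le \dots \le L_R$ are non-decreasing because the variable $j$ in Algorithm~\ref{algo:canonical-partial} is updated via $j\gets i$ and the next chosen index must be $\ge j$. The identity $\rho_i = \rho_{i-1}\circ \beta_{i-1}$ matches exactly the update $x \gets x\cup \beta_I$ in the algorithm, so each $W_i$ is a genuine $S_i$-witness for $F_{L_i}|_{\rho_i}$ by construction. For the size constraints, powerfulness guarantees $S_i\ge w$ for every $i$, while the outer loop terminates only once $\mathrm{counter}\ge t$, giving $\sum_i S_i\ge t$; and since each inner iteration adds at most $k$ (the width of a term), the final batch overshoots by at most $k$, so $\sum_i S_i \in [t,t+k]$. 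Combining $\sum_i S_i \le t+k$ with $S_i\ge w$ yields $R\le (t+k)/w$, which is what the global-witness definition effectively needs (one can absorb the small discrepancy into the $t/w$ bound, or rephrase the definition with $\lceil t/w\rceil$).

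There is essentially no obstacle here: the corollary is a bookkeeping statement that packages the output of Lemma~\ref{lemma:exists-powerful-refutation} in the exact format required for the enumeration step in Section~\ref{sec:multi-random-x}. The only subtle point worth checking carefully is that the trace of each outer iteration really does form a valid witness for the restricted formula $F_{L_i}|_{\rho_i}$; this reduces to observing that inside the inner loop, Algorithm~\ref{algo:canonical-partial} follows the canonical decision tree of $F_{L_i}|_{\rho_i \circ y}$ step by step, exactly as in the construction of a witness in Definition~\ref{def:witness}.
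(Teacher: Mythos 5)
Your proposal is correct and matches the paper's own proof, which likewise invokes Lemma~\ref{lemma:exists-powerful-refutation} to get a powerful refutation $z$ with adversarial input $\beta$, runs Algorithm~\ref{algo:canonical-partial} on $(\calF|_{\rho},\beta,z)$, and reads the global witness off the execution transcript. The only difference is that you spell out the bookkeeping verification (non-decreasing indices, the restriction update $\rho_i=\rho_{i-1}\circ\beta_{i-1}$, the size bounds) that the paper leaves implicit, and you correctly flag the minor off-by-one in the $R\le t/w$ bound, which is harmless in the applications and can be patched by rounding up.
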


\begin{proof}
By Lemma~\ref{lemma:exists-powerful-refutation}, there exists a powerful $(w,t)$-refutation for $\calF|_{\rho}$. Take one such refutation $z$ with its adversarial input $\beta$. Inspect the execution of Algorithm~\ref{algo:canonical-partial} on $(\calF|_{\rho}, \beta, z)$ and record the transcript. The transcript contains the desired tuple.
\end{proof}

Similar to what we have done in Section~\ref{sec:proof-random-x}, we define the global partial witness.

\begin{definition}
Let $t,w$ be two integers. Consider a list of $k$-DNFs $\calF = \{F_1, \dots, F_m \}$. Suppose $\rho \in \{0,1,\star\}^n$ is a restriction. Let $(R, L_i, S_i, P_i, \beta_i)$ be a tuple, where
\begin{enumerate}
    \item $1\le R \le \frac{t}{w}$ is an integer;
    \item $1\le L_1 \le L_2 \le \dots \le L_R\le m$ is a list of $R$ non-decreasing indices;
    \item $S_1.\dots, S_R$ is a list of $R$ integers such that $\sum_{i=1}^R S_i \in [t, t+k]$;
    \item $P_1, \dots, P_R$ is a list of partial witnesses. For every $i\in [R]$, $P_i$ has size $S_i$.
    \item $\beta_1,\dots, \beta_R$ are $R$ strings where $|\beta_i| = S_i$ for every $i\in [R]$.
\end{enumerate}
We call $(R, L_i, S_i, P_i, \beta_i)$ a $(w, t)$-global partial witness for $\rho$, if we can complete $P_i$ to get a witness $W_i$ for every $i\in [R]$, such that $(R, L_i, S_i, W_i, \beta_i)$ is a global witness for $\rho$.
\end{definition}

\begin{remark}\label{remark:multi-unique-completion}
Again, it is easy to see by induction that, for every global partial witness, there is exactly one way to complete it and get a global witness.
\end{remark}

By a simple counting, it turns out for every $S \in [t, t+k]$, there are at most $2m^{t/w} (12k)^S$ possible global partial witnesses of size $S$: 
\begin{enumerate}
    \item First, $2m^{t/w}$ ways to choose at most $\frac{t}{w}$ formulae in $\calF$.
    \item Then, $3^S$ ways to partition the $s$ units of ``query budget'' into $R$ partial witnesses as $S = \sum_{i=1}^R S_i$, and further partition the budget for each partial witness as $S_i = \sum_{j=1}^{r_i} s_{i,j}$.
    \item Then, at most $k^S$ ways to construct sets $B_{i,j}$ for each partial witness.
    \item Finally, there are $4^S$ ways to choose $\beta_i$'s, as well as $\alpha_{i, j}$'s in each partial witness.
\end{enumerate}

Fixing a global partial witness $(R, L_i, S_i, P_i, \beta_i)$, we try to bound the probability that a random $\mathbf{\rho}$ has this witness by $(2p)^{k}$. Using the witness searcher (Algorithm~\ref{algo:witness-searcher}) as a subroutine, we design a global witness searcher first. Again, the global witness searcher needs access to an advice string $y\in \bits^n$. See Algorithm~\ref{algo:global-searcher}.

\begin{algorithm2e}[H]
    \caption{Global Witness Searcher}
    \label{algo:global-searcher}
    \DontPrintSemicolon
    \KwIn{
        A list of DNFs $\calF = \{F_1,\dots, F_m\}$, a restriction $\rho \in \{0,1,\star\}^n$, a global partial witness $(R,L_i,S_i,P_i,\beta_i)$, and an advice $y\in \bits^n$.
    }
    \SetKwProg{Init}{initialize}{:}{}
    \Init{}{
        $z \gets \rho\circ y$. \;
        $c \gets 1$. \;
        $\rho^{(1)} \gets \rho$. \;
    }
    \While{$c\le R$} {
        Run Algorithm~\ref{algo:witness-searcher} on $(F_{L_c}, \rho^{(c)}, P_c, y)$. If it reports ERROR, report ERROR and terminate the procedure. Otherwise let $W_c$ be the witness returned. \;
        $I_c \gets $ the set of variables involved in $W_c$. \;
        Identify $\beta_c$ as a partial assignment, where only $\beta_{I_c}$ is fixed. \;
        $\rho^{(c+1)}\gets \rho^{(c)} \circ \beta_c$. \;
        $c\gets c + 1$. \;
    }
    \Return{$x$} \;
\end{algorithm2e}

In the following, we use $\calS$ to denote Algorithm~\ref{algo:global-searcher} for brevity. Based on Lemma~\ref{lemma:find-witness}, we prove:

\begin{lemma}\label{lemma:find-global-witness}
If $(R,L_i,S_i,P_i,\beta_i)$ is indeed a global partial witness for $\rho$, then there exists an advice $y$ that makes $\calS$ find $(R,L_i,S_i,P_i,\beta_i)$. More importantly, on a uniformly random $\mathbf{y}\sim \calU_n$, $\calS$ finds $(\ell_1,\dots, \ell_r)$ with probability exactly $2^{-S}$, where $S:=\sum_{i=1}^R S_i$ is the size of the global witness.
\end{lemma}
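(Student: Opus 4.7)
The plan is to induct on $c = 1, \ldots, R$, applying Lemma~\ref{lemma:find-witness} in each round. At iteration $c$, Algorithm~\ref{algo:global-searcher} feeds the tuple $(F_{L_c}, \rho^{(c)}, P_c, y)$ into the single-formula searcher of Algorithm~\ref{algo:witness-searcher}, where $\rho^{(c)} = \rho \circ \beta_1 \circ \cdots \circ \beta_{c-1}$. Because $(R, L_i, S_i, P_i, \beta_i)$ is a global partial witness for $\rho$, by Remark~\ref{remark:multi-unique-completion} we can complete it uniquely to a global witness $(R, L_i, S_i, W_i, \beta_i)$; in particular $P_c$ is a partial $S_c$-witness for $F_{L_c}|_{\rho^{(c)}}$. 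Lemma~\ref{lemma:find-witness} therefore applies in round $c$: there is a choice of the $S_c$ coordinates of $y$ that are unfixed in $\rho^{(c)}$ and are read during that round which causes Algorithm~\ref{algo:witness-searcher} to return $W_c$, and conditioned on those coordinates being uniform, this happens with probability exactly $2^{-S_c}$.

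The crucial ingredient for multiplying these $R$ probabilities together is disjointness. Let $I_c \subseteq [n]$ be the set of coordinates involved in $W_c$, so $|I_c| = S_c$. By construction, $\rho^{(c+1)} = \rho^{(c)} \circ \beta_c$ fixes exactly the positions in $I_c$, so for every $c' > c$ the restriction $\rho^{(c')}$ passed to Algorithm~\ref{algo:witness-searcher} at round $c'$ already has the $I_c$ positions fixed, and the searcher will never consult $y_{I_c}$ again. Hence $I_1, \ldots, I_R$ are pairwise disjoint, and the event that $\calS$ finds the entire global witness decomposes as the intersection of $R$ events, with the $c$-th depending only on $\mathbf{y}_{I_c}$. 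Since the coordinates of $\mathbf{y} \sim \calU_n$ are independent uniform bits, these events are jointly independent, and the overall success probability equals $\prod_{c=1}^{R} 2^{-S_c} = 2^{-S}$.

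For the existence of a good advice string, we simply glue together the per-round choices provided by Lemma~\ref{lemma:find-witness}: set $y_{I_c}$ to the specific values that make round $c$ succeed, which are determined by the $\alpha$-strings inside $P_c$, and fill the remaining coordinates of $y$ arbitrarily.

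I expect the main technical nuisance to be verifying that each application of Lemma~\ref{lemma:find-witness} is legitimate; namely, that the random bits read by the $c$-th call to Algorithm~\ref{algo:witness-searcher} really are jointly uniform conditional on everything that has happened in rounds $1, \ldots, c-1$. This reduces to the disjointness observation above and a careful unwinding of the definition of global (partial) witness, so while the bookkeeping is a bit intricate, no idea beyond what already appears in Section~\ref{sec:proof-random-x} is needed.
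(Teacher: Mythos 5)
Your proof is correct and takes essentially the same route as the paper's: the paper phrases the argument as an induction with conditioning (after round $c$ succeeds, only $\mathbf{y}_{I_c}$ has been committed, and $\rho^{(c+1)}$ hides those positions thereafter), while you phrase it as a product of independent events on disjoint coordinate sets $I_1,\ldots,I_R$; these are the same observation in two equivalent forms.
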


\begin{proof}
We examine the execution of $\calS$. It first runs Algorithm~\ref{algo:witness-searcher} as a subroutine, trying to find $W_1$ for $P_1$. With probability $2^{-S_1}$ over $\mathbf{y}$, Algorithm~\ref{algo:witness-searcher} succeeds in finding $W_1$. Conditioning on this happened, we have only committed the assignment of $\mathbf{y}_{I_1}$, and the rest part of $\mathbf{y}$ is still uniformly random. Also note that we set $\rho^{(2)}_{I_1}$ to $\beta_1$, which will hide $\mathbf{y}_{I_1}$ in the later execution. This enables us to do an induction and finish the proof.
\end{proof}

\paragraph*{Decoupling} Now, we can observe that $\calS$ only needs to know $z = \rho\circ y$ to work. In particular, it does not need to know which part of $z$ is fixed in $\rho$. Therefore, we can revise $\calS$ to get a searcher $\calS'$: the input to $\calS'$ is now a string $z$ and a global partial witness $(R, L_i, S_i, P_i, \beta_i)$. Otherwise it runs identically the same way as $\calS$. We denote the output of $\calS'$ as $\calS'(z, (R, L_i, S_i, P_i, \beta_i))$. By Lemma~\ref{lemma:find-global-witness}, we have
\[
\begin{aligned}
& ~~~~ \Pr_{\rho\sim \calR}[ \text{$(R, L_i, S_i, P_i, \beta_i)$ is a global partial witness for $\rho$} ]\\
&\le 2^S \cdot \Pr_{\rho \sim \calR,y\sim \calU_n}[ \calS'(\rho\circ y, (R, L_i, S_i, P_i, \beta_i)) \text{ is a global witness for $\rho$} ]
\end{aligned}
\]

We observe that
\begin{align}
\ind\{\text{ $(R, L_i, S_i, W_i, \beta_i)$ is a global witness for $\rho$}\} \le \ind\left\{\left( \bigcup_{j=1}^R I_j \right)\subseteq \Lambda \right\}. \label{eq:multi-test-set}
\end{align}
For a $(t+k)$-wise $p$-bounded $\mathbf{\Lambda}$, the event on the right hand side holds with probability $p^{\sum_{j} |I_j|}$. Then, we have
\[
\begin{aligned}
& ~~~~~ \Pr_{\rho\sim \mathbf{\Lambda}[\mathbf{x}, 0],y\sim \calU_n}[ \text{ $ \calS'(\rho\circ y, (R, L_i, S_i, P_i, \beta_i))$ is a global witness for $\rho$} ] \\
& = \Ex_{\mathbf{\Lambda}}\Ex_{x,y\sim \calU_n}\left[  \ind\{\text{$ \calS'(\mathbf{\Lambda}[x,y], (R, L_i, S_i, P_i, \beta_i))$ is a global witness for $\rho$}\} \right] \\
& =\Ex_{z\sim\calU_n}\left[  \Pr_{\mathbf{\Lambda}}[\text{$ \calS'(z, (R, L_i, S_i, P_i, \beta_i))$ is a global witness for $\rho$}] \right] \\
& \le p^S,
\end{aligned}
\]
where the third line is due to that $\Lambda[x,y]$ is distributed as $\calU_n$ when $x,y\sim \calU_n$, and the last inequality holds by \eqref{eq:multi-test-set} and the $(t+k)$-wise $p$-bounded property of $\Lambda$.

\paragraph*{Wrapping-up} We finish the proof by enumerating all $(R,L_i,S_i,P_i,\beta_i)$ and taking a summation.

\begin{align}
&~~~~~~ \Pr_{\rho\sim \mathbf{\Lambda}[\mathbf{x},\star]}[\text{$\calF|_{\rho}$ does not have $w$-partial depth-$t$ decision tree}] \notag \\
&\le 
\sum_{(R, L_i, S_i, P_i, \beta_i)} \Pr_{\rho \sim \mathbf{\Lambda}[\mathbf{x},\star]}[ \text{$(R, L_i, S_i, P_i, \beta_i)$ is a global partial witness for $\rho$} ] \notag
\\
&\le 
\sum_{(R, L_i, S_i, P_i, \beta_i)} 2^{S}\cdot \Ex_{z\in \calU_n}\left[\Pr_{\mathbf{\Lambda}}[\text{$\calS'(z,(R, L_i, S_i, P_i, \beta_i))$ is a global witness for ${\mathbf{\Lambda}[\mathbf{x}, \star]}$} ]\right] \notag
\\
&\le 
\sum_{S=t}^{t+k} 2m^{t/w} 2^{S} (12k)^{S} p^S \notag
\\
&\le 4m^{t/w}(24kp)^t. \label{eq:multi-bound-for-uniform-x}
\end{align}

\subsection{Proof when $\mathbf{x}$ is pseudorandom}\label{sec:multi-pseudo-x}

Now we consider the case that $\mathbf{x}$ is not truly random. Our only requirement for $\mathbf{x}$ is that, for every fixed $\mathbf{\Lambda}$, $\mathbf{x}$ is a pseudorandom string that $\eps$-fools CNFs of size at most $m$. First of all, by Remark~\ref{remark:multi-unique-completion}, the following equation is established.
\begin{align}
&~~~~~~ \mathbbm{1}\{\text{$(R, L_i, S_i, P_i, \beta_i)$ is a global partial witness for $\rho$}\} \notag \\
&= \sum_{W_i: \text{completion of } P_i} \mathbbm{1}\{\text{$(R, L_i, S_i, W_i, \beta_i)$ is a global witness for $\rho$}\}. \label{eq:multi-partial-summation}
\end{align}

Then, the deduction in \eqref{eq:multi-bound-for-uniform-x} implies that
\begin{align}
\sum_{(R, L_i, S_i, W_i, \beta_i)} \Pr_{\rho \sim \mathbf{\Lambda}[\mathcal{U}_n,\star]}[ \text{$(R, L_i, S_i, W_i, \beta_i)$ is a global witness for $\rho$} ] \le 4m^{t/w}(24kp)^t. \label{eq:multi-bound-for-uniform-x-technical}
\end{align}
Next, fixing a tuple $(R, L_i, S_i, W_i, \beta_i)$ and an instantiation of $\Lambda$, we consider the predicate
\[
h^{(R, L_i, S_i, W_i, \beta_i)}_{\Lambda}(x) := \mathbbm{1}\{ \text{$(R, L_i, S_i, W_i, \beta_i)$ is a global witness for ${\Lambda[x, \star]}$}  \}. 
\]
We omit the superscript and subscript of $h$ when they are clear from context. We claim that $h$ can be implemented by a CNF of size at most $\sum_{i=1}^m \mathsf{size}(F_i)\le m^2$. To see this, note that $h(x) = 1$ if for every $i\in [R]$, $W_i$ is a witness for $\rho\circ \beta_1 \circ \dots \circ \beta_{i-1}$. By argument in Section~\ref{sec:multi-pseudo-x}, this can be verified using a CNF of size $\mathsf{size}(F_{L_i})$. Therefore, we can evaluate $h$ using an AND over $R$ CNFs, which is itself a larger CNF. Therefore, we conclude that $\mathbf{x}$ $\eps$-fools $h$. That is,
\begin{align}
\left| \Pr_{x\sim \mathcal{U}_n}[h^{(R, L_i, S_i, W_i, \beta_i)}_{\Lambda}(x)] - \Pr_{x\sim \mathbf{x}}[h^{(R, L_i, S_i, W_i, \beta_i)}_{\Lambda}(x)] \right| \le \eps. \label{eq:multi-pseudo-x-fool-cnf}
\end{align}
Finally, we have
\[
\begin{aligned}
&~~~~ \Pr_{\rho\sim \mathbf{\Lambda}[\mathbf{x},\star]}[\text{$\calF|_{\rho}$ does not have $w$-partial depth-$t$ decision tree}] \\
&\le 
\sum_{(R, L_i, S_i, W_i, \beta_i)} \Pr_{\rho \sim \mathbf{\Lambda}[\mathbf{x},\star]}[ \text{$(R, L_i, S_i, W_i, \beta_i)$ is a global witness for $\rho$} ] \\
&\le \sum_{(R, L_i, S_i, W_i, \beta_i)} \eps + \Pr_{\rho \sim \mathbf{\Lambda}[\mathcal{U}_n,\star]}[ \text{$(R, L_i, S_i, W_i, \beta_i)$ is a global witness for $\rho$} ]\\
&\le 4m^{t/w}(24kp)^t + (24m)^{t+k}\cdot \eps.
\end{aligned}
\]
Here, the first line is due to the argument in Section~\ref{sec:multi-random-x} and observation~\eqref{eq:multi-partial-summation}. The second line is due to \eqref{eq:multi-pseudo-x-fool-cnf}. The last line holds because \eqref{eq:multi-bound-for-uniform-x-technical} and there are at most $(24m)^{t+k}$ possible tuples $(R, L_i, S_i, W_i, \beta_i)$.



\section*{Acknowledgements}

I would like to thank my advisor, Avishay Tal, for many insightful discussions during the project. I am also grateful to Lijie Chen and Avishay Tal for helpful comments on an early draft, which helped me improve the presentation significantly.

\bibliographystyle{alpha}
\bibliography{mybib}

\appendix

\end{document}